\newif\ifreport\reportfalse
\newtheorem{theorem}{Theorem}
\newtheorem{lemma}{Lemma}
\newtheorem{corollary}{Corollary}
\newtheorem{proposition}{Proposition}
\DeclareMathOperator*{\argmin}{arg\,min}
\begin{document}

\title{Age Minimization with Energy and Distortion Constraints}

\author{\IEEEauthorblockN{Guidan Yao}
\IEEEauthorblockA{Department of ECE\\
The Ohio State University\\
Columbus, OH, USA\\
Email: yao.539@osu.edu}
\and
\IEEEauthorblockN{Chih-Chun Wang}
\IEEEauthorblockA{Elmore Family School of ECE\\
Purdue University\\
West Lafayett, IN, USA\\
Email: chihw@purdue.edu}
\and
\IEEEauthorblockN{Ness B. Shroff}
\IEEEauthorblockA{Department of ECE and CSE\\
The Ohio State University\\
Columbus, OH, USA\\
Email: shroff.11@osu.edu}}


\maketitle

\begin{abstract} 
In this paper, we consider a status update system, where an access point collects measurements from multiple sensors that monitor a common physical process, fuses them, and transmits the aggregated sample to the destination over an erasure channel. Under a typical information fusion scheme, the distortion of the fused sample is inversely proportional to the number of measurements received. Our goal is to minimize the long-term average age while satisfying the average energy and general age-based distortion requirements. Specifically, we focus on the setting in which the distortion requirement is stricter when the age of the update is older. We show that the optimal policy is a mixture of two stationary, deterministic, threshold-based policies, each of which is optimal for a parameterized problem that aims to minimize the weighted sum of the age and energy under the distortion constraint. We then derive  analytically the associated optimal average age-cost function and characterize its performance in the \emph{large threshold regime}, the results of which shed critical insights on the tradeoff among age, energy, and the distortion of the samples. We have also developed a closed-form solution for the special case when the distortion requirement is independent of the age, arguably the most important setting for practical applications. 
\end{abstract}




\maketitle

\section{Introduction}
For status update systems, it is important that the destination
receives fresh updates. However, a traditional metric like delay cannot fully characterize the freshness of information updates. For example, if the information is updated infrequently, then the updates are not fresh even though the delay is small. To this end, the age of information or simply the age was introduced in \cite{kaul2011minimizing} as a metric to represent the freshness (more precisely the staleness) of an update that simultaneously take into the update frequency and the delay into a single metric. 

\subsection{Problem and Applications}
In this paper, we consider a status update system, in which an access point receives measurements from multiple sensors, fuses them and transmits the aggregated sample to a remote monitor over wireless erasure channels.

Examples of the system can be found in wireless sensor networks (WSNs) and IoT systems. In certain IoT or WSNs applications like smart camera networks \cite{wang2013intelligent} or healthcare applications \cite{naresh2020internet}, multiple nodes (IoT devices/sensors) are used to observe a \emph{common} physical process. In ~\cite{kalor2019minimizing,zhou2020age,shao2021partially,9155238,tsai2021unifying}, the authors took this scenario into account in age relevant problems.
In addition, in wireless sensor networks or IoT systems, instead of allowing all nodes to directly communicate with the receiver, one node may be selected as a gateway/relay to forward collected data to the receiver in order to reduce the energy consumption \cite{gupta2005cluster}. Specifically, we have two examples as follows: 

\emph{Example 1: healthcare.} In healthcare architectures \cite{abdelmoneem2019cloud,kong2016design}, a sink node like a mobile device or a smart watch collects health indicators from wearable biomedical and activity sensors including ECG
collection, blood pressure, blood oxygen. Then, the collected data is sent to a cloud or back-end server for further processing/analysis. 

\emph{Example 2: smart agriculture.} In one mode of smart agriculture \cite{ayaz2019internet}, sensor nodes in a mesh network collect and transmit data first to the gateway, a designated node in the mesh network. Then, the gateway forwards this data to the farm management system using the WAN network.


Since wireless channels are not reliable and different sensor nodes may have their own sleep-wake schedules, the access point usually receives a random number of measurements at each time slot. In this work, we assume the sensors nodes (in different positions) observe a common process from different views. Under this assumption, the larger the number of received measurements, the higher the quality of the collected data in each time instant, and the less distortion of the update for the physical process. To ensure that precious resources are only used to forward samples of high quality, we impose a {\em distortion requirement} such that  at each time slot the access point can forward the fused/aggregated sample {\em only if} the number of received measurements is no less than a predefined threshold, which thus guarantees low distortion of each update.

In addition to jointly considering age and distortion, we note that nodes in status update systems are usually battery-powered and thus energy limited. See the two examples discussed earlier. Since communication energy/cost savings is a critical design consideration of any IoT device schedulers, the goal of this paper is to minimize the long-term average age under a long-term energy constraint, while respecting the aforementioned distortion requirement for each update.

\subsection{Related Works}
\begin{table*}[]
\footnotesize
\caption{Related Works}
\small
\label{Related_Works}
\begin{tabular}{lllll}
\hline\hline
Ref.        & Goal                                         & \begin{tabular}[c]{@{}l@{}}Energy \\ Constraint\end{tabular}                             & \begin{tabular}[c]{@{}l@{}}Distortion\\ Requirement\end{tabular}               & \begin{tabular}[c]{@{}l@{}}Channel\\ State Information\end{tabular}              \\ \hline
{\cite{gu2019timely}}          & \begin{tabular}[c]{@{}l@{}@{}}Minimize the average age by \\ optimizing the transmit power and the \\ maximum allowable transmission times \end{tabular} & \begin{tabular}[c]{@{}l@{}}Average power \\ consumption \end{tabular} 
& No                                                                            & \begin{tabular}[c]{@{}l@{}@{}}Error-prone \\ channel with \\fixed probability\end{tabular}          \\ \hline

{\cite{huang2021age}}          & \begin{tabular}[c]{@{}l@{}}Minimize the weighted sum of \\ the age and total energy consumption\end{tabular} & \begin{tabular}[c]{@{}l@{}}Limit number of \\ retransmissions \end{tabular}                                                            & No                                                                            & \begin{tabular}[c]{@{}l@{}}Distribution information\\ or unknown\end{tabular}          \\ \hline
{\cite{9736576, yao2021age}} & Minimize long-term average age                                                                                                        & \begin{tabular}[c]{@{}l@{}}Average energy \\ consumption \end{tabular}                                                              & No                                                                            & \begin{tabular}[c]{@{}l@{}@{}}Fading channel\\ imperfect channel \\ information\end{tabular}               \\ \hline
{\cite{feng2021age}}          & \begin{tabular}[c]{@{}l@{}} Minimize long-term average age\end{tabular}         & \begin{tabular}[c]{@{}l@{}@{}}Energy causality\\ constraint at the \\ EH sensor\end{tabular} & No                                                                            & \begin{tabular}[c]{@{}l@{}}Erasure channel with \\ fixed error probability\end{tabular} \\ \hline

{\cite{zheng2020age}}          & \begin{tabular}[c]{@{}l@{}} Study age-energy region by \\ studying average age minimization\end{tabular}         & \begin{tabular}[c]{@{}l@{}}Limited by\\ harvested energy\end{tabular} & No                                                                            & Fixed noise power \\ \hline

{\cite{rajaraman2021not}}          & \begin{tabular}[c]{@{}l@{}}Minimize the weighted sum of the\\ age, distortion and energy\end{tabular}                                & No                                                                                       & \begin{tabular}[c]{@{}l@{}}Soft constraint on controllable\\ distortion due to compression           \end{tabular}  &           No transmission failure                                                                       \\ \hline
{\cite{dong2020energy}}           & \begin{tabular}[c]{@{}l@{}}Minimize the weighted sum of the\\ age and distortion caused by noise\end{tabular}                                       & \begin{tabular}[c]{@{}l@{}}Limited by \\ harvested energy           \end{tabular}                                                          & \begin{tabular}[c]{@{}l@{}}Soft constraint on controllable\\ distortion due to observation noise           \end{tabular}                                              &  Gaussian channel                                                                                \\ \hline
{\cite{bastopcu2019age,bastopcu2021age}}    & Minimize the average age over a time $T$                                                                                             & No                                                                                      & \begin{tabular}[c]{@{}l@{}}Hard constraint on controllable \\ distortion due to processing time          \end{tabular}        & No transmission failure                                                          \\ \hline\hline
\end{tabular}
\end{table*}

There is a large body of literature that studies the trade-off between the age and energy. In \cite{gu2019timely}, the authors studied the trade-off in an IoT system by controlling the allowable times of retransmissions. \cite{huang2021age} studied power control policies that minimize the weighted sum of the age and energy consumption (including sensing and transmission energy costs) with constraint on times of retransmissions. \cite{yao2021age} and \cite{9736576} studied transmission scheduling over time-correlated fading channel to minimize long-term average age under an energy constraint. In \cite{zheng2020age}, the authors investigated the trade-off between the age and the storable energy at the IoT device in a wireless powered communication network. The authors in \cite{feng2021age} designed optimal online status updating policy to minimize the long-term average age at the destination, subject to the energy causality constraint at an energy-harvesting sensor. 

Different from the above works, one key consideration of this work is the focus on the {\em distortion requirement}. This requirement is hard in the sense that it has to be met all the time. In contrast, some papers deal with a soft distortion requirement ~\cite{dong2020energy,rajaraman2021not}, which assume there exists a trade-off between the age and distortion. Although both requirements have applications, in practical system, it can be more difficult to satisfy requirements from different aspects of the system simultaneously. As will be seen, our paper directly links the distortion to the (random) number of received measurements at any time slot. In general, distortion may be caused by other sources as well. For example, the distortion considered in ~\cite{hu2020balancing,rajaraman2021not} is caused by by compression. In particular, \cite{rajaraman2021not} studied a scheduling problem which aims to minimize the weighted sum of the age, distortion and energy, where distortion is determined by the number of bits sent for each source.
Paper \cite{hu2020balancing} investigated the trade-off between the age and the distortion caused by compression via assigning compression bits to packets in the queue and transmission scheduling. Distortion may also be caused by observation noise. In \cite{dong2020energy}, the authors considered this type of distortion and studied the optimal power control policy that minimizes the weighted sum of the age and distortion. 
In ~\cite{bastopcu2019age,bastopcu2021age}, the authors considered the distortion caused by the processing time and studied age-optimal distortion constrained updating policies. Specifically, \cite{bastopcu2019age} considered a fixed distortion requirement while \cite{bastopcu2021age} considered an age-dependent distortion requirement. For comparison, we summarize the related works in Table \ref{Related_Works}. 

\subsection{Key Contributions}
In the paper, we focus on the setting for which the distortion requirement of each update is stricter when the age of the system is older.
The idea is that if the age is also a source of distortion (a reasonable assumption, since the larger the age, the more likely that the estimates are poorer), then as the age increases, we would like to place a stricter allowable allowable distortion criterion (i.e., larger lower bound) of new updates to ensure that the overall quality of transmission is maintained. 
We then develop the optimal transmission control policy that minimizes the long-term average age under a long-term energy constraint while respecting the given age-based distortion requirement on each update.
Our key contributions are as follows:
\begin{itemize}
    \item We investigate the trade-offs among the age, energy and the distortion. Under our setting, we show that the optimal policy, which minimizes the long-term average age with the energy and distortion requirements, is a mixture of two stationary deterministic policies (Theorem \ref{Ori_Lag}). We also show that each stationary deterministic policy is optimal for a parameterized average cost problem, which aims to minimize the weighted sum of the age and energy while respecting the given age-based distortion requirement (Theorem \ref{Ori_Lag}). Further, we prove that the policy is of a threshold-type, i.e., a transmission is scheduled if (i) the age exceeds a certain threshold, and (ii) the distortion requirement is met (Theorem \ref{opt_stru}).
    \item We derive the average cost of the parameterized average cost problem (Theorem \ref{new}), and prove that it is a piecewise function of the earlier mentioned threshold (Theorem \ref{cost_cal}), and analytically characterize the property of the function (Theorem \ref{new}). By leveraging Theorems \ref{cost_cal} and \ref{new}, we circumvent the difficulty in dealing with an infinite state space when using classical solutions like the Relative Value Iteration (RVI). This allows us to develop low-complexity algorithms for both parameterized average cost problem (Algorithm \ref{alg0}) and the original problem (Algorithm \ref{alg1}).
    \item  In additional to characterizing the optimal policy for the general setting, we consider a special case of the parameterized average cost problem, where the distortion requirement is a constant that is independent of the age, arguable the most important setting for practical applications. In this important but simpler setting, we obtain a closed form expression for the optimal threshold (Corollary \ref{cor: special_case}), which allows us to examine the relationship among (a) the transmission threshold; (b) the probability of meeting the distortion requirement; and (c) the erasure probability of the access point's transmission. Specifically, we show that (i) the optimal threshold increases with the probability that the distortion requirement is met; (ii) when the energy is the dominant issue, the optimal threshold increases with the error probability of transmission from the access point to destination (Theorem \ref{lem:special_case}). But the situation reverses (from being an increasing function of the error probability to being a decreasing one) if the age is the dominant issue.
\end{itemize}


\section{System Model}
\label{scheduler}
We consider a status update system, in which an access point receives measurements from multiple sensors, fuses them, and then transmits the aggregated sample to a remote monitor/receiver, as shown in Fig. \ref{sysmodel}.
We consider a time-slotted system and use $t\in\{1,2,\cdots\}$ as the time index. At the beginning of each time slot, $M$ sensors measure the same physical process from their own perspectives and transmit the measurements to the access point over the wireless channels. Then, the access point decides whether to transmit an {\em update} to the remote monitor. Here an update can mean sending the entire set of received measurements to the remote monitor for further processing or it could mean sending the aggregated sample after fusing the data locally. The access point's action is denoted by $u_{t} \in \mathcal{U}\triangleq\{0,1\}$, where $u_{t}=1$ means transmission, and $u_{t}=0$ denotes forfeiting the transmission for time slot $t$.

We assume erasure channels. Specifically, the erasure probability of a transmission from the sensor $m$ to the access point is $q_m$, for $m\in\{1,2,\cdots,M\}$, and the erasure probability of the transmission from the access point to the receiver is $p$. 
\ifreport
\begin{figure}[h]
    \centering
    \includegraphics[width=0.6\textwidth]{Age_dist/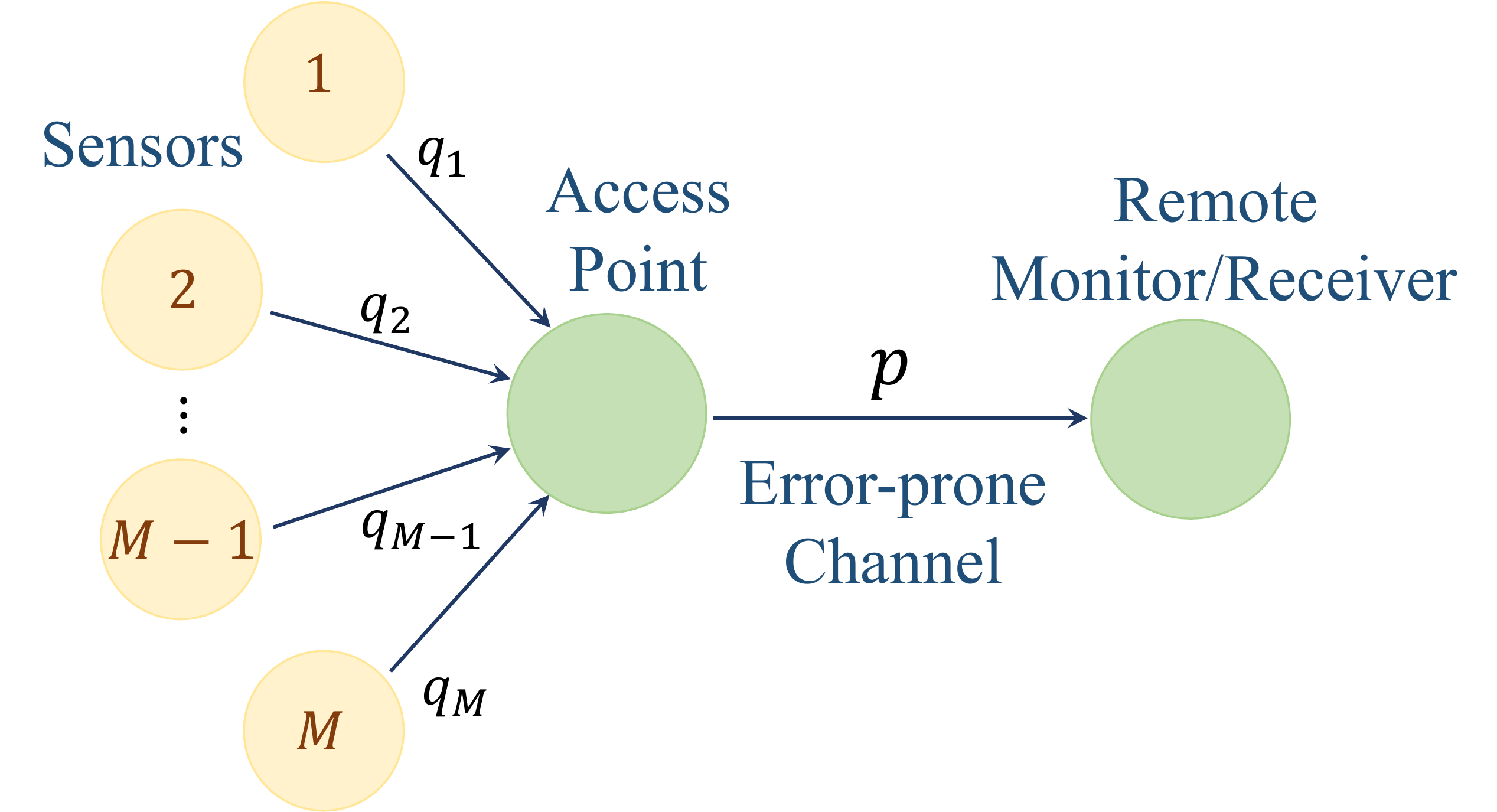}
    \caption{System Model}
    \label{sysmodel}
\end{figure}
\else 
\begin{figure}[h]
    \centering
    \includegraphics[width=0.32\textwidth]{SysModel.pdf}
    \caption{System Model}
    \label{sysmodel}
\end{figure}
\fi

\subsection{Age of Information}
Age of information (AoI), or simply the age, reflects the timeliness of the information at the remote monitor/receiver. It is defined as the time elapsed since the generation of the most recently received update sample at the receiver. Let $\Delta_{t}$ denote the age at the beginning of the time slot $t$. Let $U(t)$ denote the generation time of the last successfully received status update at time $t$. Then, $\Delta_{t}$ is given by $\Delta_{t}\triangleq t-U(t)$, which can be iteratively computed by
\begin{align}
\label{age_update}
& \Delta_{t+1}=
  \begin{cases}
  1 & \text{if transmission is successful}, \\  
  1+\Delta_{t} & \text{otherwise}. 
  \end{cases} 
\end{align}
In this work, we assume $\Delta_1=1$ for initialization.
\subsection{Distortion requirement and energy constraints}
\label{sec: const}
Since transmissions from sensors to the access point are not reliable, we use the random variable $\Lambda_t\in\{0,1,\cdots, M\}$ to denote the number of received measurements from $M$ sensors at time slot $t$. After fusing $\Lambda_t$ measurements to an aggregated sample, the corresponding distortion is a monotonically decreasing\footnote{In this work, the terms {\em decreasing} and {\em non-increasing} are considered interchangeable.} function $F_\mathrm{dist}(\Lambda_t)$ of $\Lambda_t$. To guarantee the quality of each update, we suspend access-point transmission whenever 
\begin{align}
F_\mathrm{dist}(\Lambda_t)> F_\mathrm{thre}(\Delta_t),\label{eq:dist-111}
\end{align}
i.e., we prohibit the access point from transmitting a low-quality (high-distortion) sample since it is essentially a waste of resources. We allow the threshold $F_\mathrm{thre}$ to depend on the age at the receiver. While \eqref{eq:dist-111} has a clear physical meaning, we can simplify {\em distortion requirement} to: we always choose $u_t=0$, if 
\begin{align}
&\Lambda_t< D(\Delta_t)\triangleq F_\mathrm{dist}^{-1}( F_\mathrm{thre}(\Delta_t)). \label{eq:dist-222}
\end{align}
The $D(\cdot)$ function is the concatenation of $F_{\mathrm{dist}}^{-1}()$ and $F_\mathrm{thre}()$, and we call it the {\em distortion function} that maps the age to the number of received measurements below which the access point will always discard the measurements ($u_t=0$) due to the lack of fidelity (distortion being too high). 

We assume that $D(\cdot)$ is an increasing function of the age on $[1,\infty)$ (since the smallest age is 1). Since $\Lambda_t$ in condition ~\eqref{eq:dist-222} is an integer between $0$ and $M$, without loss of generality, we assume $D(\cdot)$ is a piecewise constant function satisfying:
\begin{align}
D(\Delta)=h_l, \text{\ \ if\ \ } \Delta\in[\delta_l,\delta_{l+1}),~\forall l\in[1,L].
\end{align}
Namely, if $\Delta$ falls into the interval $[\delta_l,\delta_{l+1})$, then $D(\Delta)=h_l$. Here we assume $\delta_1=1$, $\delta_{L+1}=\infty$ and $1\leq h_l<h_{l+1}\leq M$ for all $l<L$, since if $h_L> M$, the transmission will never be made after the age exceeds $\delta_L$ (condition \eqref{eq:dist-222} will always hold then).
Fig.~\ref{fig:distfun_inc} provides an example of the distortion function, where $L=3$ and $M=8$.
\ifreport
\begin{figure}
    \centering
    \includegraphics[width=0.55\textwidth]{Age_dist/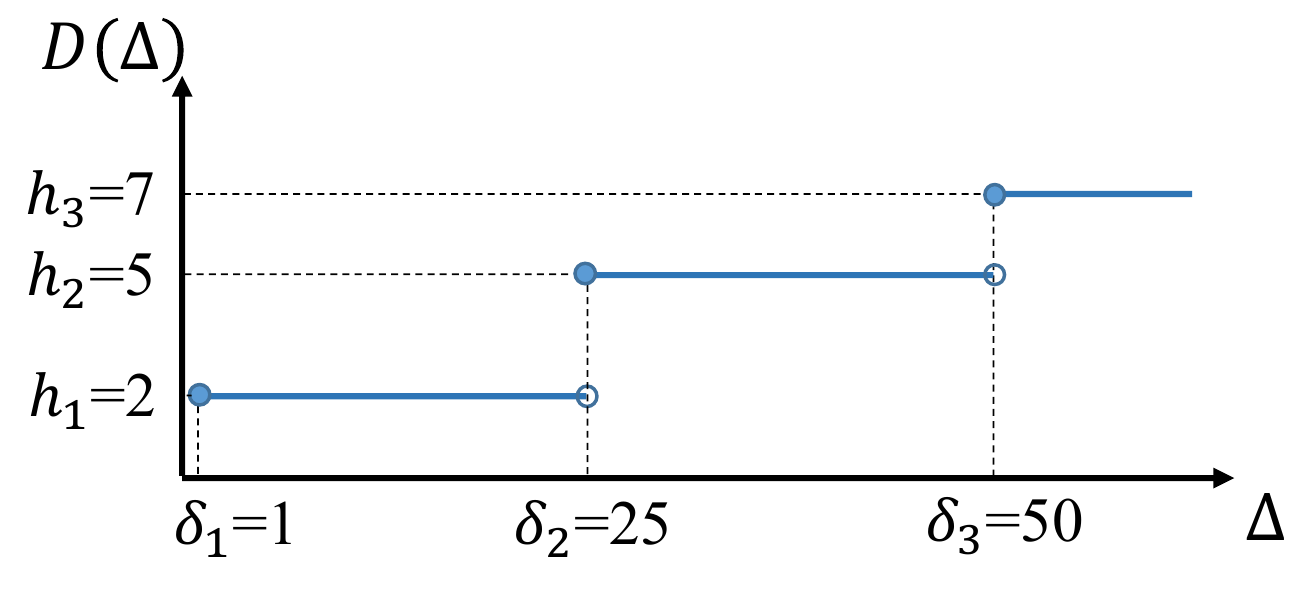}
    \caption{Example of distortion function}
    \label{fig:distfun_inc}
\end{figure}
\else
\begin{figure}
    \centering
    \includegraphics[width=0.38\textwidth]{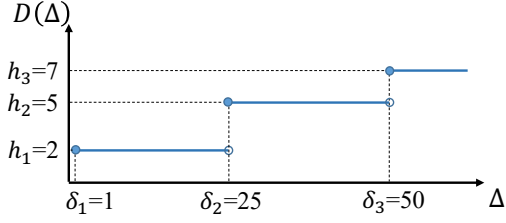}
    \caption{Example of distortion function}
    \vspace{-0.5cm}
    \label{fig:distfun_inc}
\end{figure}
\fi 



The access point consumes energy for each transmission. We assume that each transmission consumes the same energy which is normalized as one unit energy, a setting similar to \cite{tsai2021jointly}. To avoid excessive energy consumption, we employ a long-term average energy consumption constraint for the access point, which will be formalized later in \eqref{avg_energy}. Note that in the system, sensors take measurements periodically, and their energy consumption is fixed and thus not included in our optimization problem.


\subsection{Optimization Problem}
Our objective is to design a transmission control policy $\pi$ that minimizes the following long-term average age
\begin{align}
	&\bar{A}(\pi)\triangleq \lim_{T \rightarrow \infty} \frac{1}{T}\mathbb{E_\pi}\big[\sum_{t=1}^T \Delta_t\big],
	\label{avg_age}
\end{align}
while the long-term average energy consumption $\bar{E}(\pi)$ must not exceed $E_{\text{max}}\in (0,1]$, i.e.
\begin{align}
	&\bar{E}(\pi)\triangleq \lim_{T \rightarrow \infty} \frac{1}{T}\mathbb{E_\pi}\big[\sum_{t=1}^T u_t\big]\leq E_\text{max},
	\label{avg_energy}
\end{align}
and the distortion requirement in \eqref{eq:dist-222} is satisfied for all $t$, i.e.
\begin{align}
    \Lambda_t\geq u_t D(\Delta_t),\ \  \forall t\geq 1,\label{dist_req}
\end{align}
where $\mathbb{E}_\pi$ denotes expectation under policy $\pi$.

\section{Constrained MDP Formulation and Lagrangian Relaxation}
\label{sec: formulation}
\subsection{Constrained MDP Formulation}
\label{components}
The optimization problem can be formulated as a constrained MDP. 

\textbf{States:} The system state consists of the age and the number of received measurements at time $t$, i.e., $\mathbf{s}_t=(\Delta_t,\Lambda_t)$. Clearly, the state space, denoted by $\mathcal{S}\triangleq \{(\Delta,\Lambda): \Delta\in \mathbb{N}^+,  \Lambda\in\{0,1,\cdots,M\}\}$ is countably infinite. 

\textbf{Actions:} Action set is $\mathcal{U}=\{0,1\}$ as defined in Section \ref{scheduler}. Here we directly embed the distortion requirement in \eqref{dist_req} in the setting by assigning a heterogeneous action set for each $\mathbf{s}_t$. That is, define $A_{\mathbf{s}_t}\triangleq \{u_t\in \mathcal{U}: \Lambda_t\geq u_t D(\Delta_t)\}$ as the admissible action set in state $\mathbf{s}_t$ satisfying the distortion requirement \eqref{dist_req}. 
For example, with the distortion requirement in Fig. \ref{fig:distfun_inc}, we have $A_{(5, 5)}=\{0,1\}$ while $A_{(51, 5)}=\{0\}$. 

\textbf{Transition Probability:} Given the current state $\mathbf{s}_{t}= (\Delta_{t}, \Lambda_{t})$ and action $u_{t}$ at time slot $t$, the transition probability to the state $\mathbf{s}_{t+1}=(\Delta_{t+1}, \Lambda_{t+1})$ at the next time slot $t+1$, which is denoted by $P_{\mathbf{s}_{t}\mathbf{s}_{t+1}}(u_t)$, is defined as
 \begin{align}
	P_{\mathbf{s}_{t}\mathbf{s}_{t+1}}(u_t)&\triangleq\mathbb{P}(\mathbf{s}_{t+1}|\mathbf{s}_{t},u_{t})=\mathbb{P}(\Delta_{t+1}|\Delta_t,u_t)P_\Lambda(\Lambda_{t+1}),
\end{align}
where
\begin{align}
&\mathbb{P}(\Delta_{t+1}|\Delta_t,u_t)=
  \begin{cases}
  p & \text{if}\, \, u_{t}=1,\Delta_{t+1}=1+\Delta_t, \\
  1-p & \text{if} \, \, u_{t}=1,\Delta_{t+1}=1,\\  
  1 &\text{if} \, \, u_{t}=0,\Delta_{t+1}=1+\Delta_t,\\
  0 & \text{otherwise},
  \end{cases}
\end{align}
and $P_\Lambda(\Lambda_{t+1})\triangleq\mathbb{P}(\Lambda\!=\!\Lambda_{t+1})$.

\textbf{Costs:} Given a state $\mathbf{s}_t=(\Delta_{t}, \Lambda_{t})$ and an action choice $u_t$ at time slot $t$, the cost of one slot is the age at the beginning of this slot, i.e., we have
\begin{equation}
	C_\Delta(\mathbf{s}_t,u_{t})=\Delta_{t}.
\end{equation}
Moreover, the energy consumption of one slot is
\begin{equation}
	C_E(\mathbf{s}_t,u_{t})=u_t.
\end{equation}

Let $\Pi$ denote the set of \emph{feasible} policies that satisfy the distortion requirement, i.e., $u_t\in A_{\mathbf{s}_t}$, $\forall t$. Then, our control problem can be reformulated as a constrained Average-age MDP: 

\textit{Problem 1 (Average-age MDP):}
\begin{alignat}{2}
\label{ori_prob}
 \bar{A}^\star\triangleq   \min_{\pi\in\Pi} \quad & \bar{A}(\pi)= \limsup_{T \rightarrow \infty} \frac{1}{T}\mathbb{E_\pi}\big[\sum_{t=1}^T C_\Delta(\mathbf{s}_{t},u_{t})\big] &  \\
    \mathrm{s.t.} \quad & \bar{E}(\pi)=\limsup_{T \rightarrow \infty} \frac{1}{T}\mathbb{E_\pi}\big[\sum_{t=1}^T C_E(\mathbf{s}_{t},u_{t})\big] \leq E_{\text{max}}, &   \nonumber
\end{alignat}
where $\bar{A}^\star$ is the optimal average age.
\subsection{Lagrange Relaxation of the Constrained MDP}
We now solve \eqref{ori_prob}. Given Lagrange multiplier $\beta$, the instantaneous Lagrangian cost at time slot $t$ is defined by
\begin{align}	C(\mathbf{s}_{t},u_{t};\beta)\triangleq C_\Delta(\mathbf{s}_{t},u_{t})+\beta C_E(\mathbf{s}_{t},u_{t}).
	\label{inst_lag_cost}
\end{align}
Then, the long-term average Lagrangian cost under policy $\pi$ is 
\begin{align}
	&\bar{L}(\pi;\beta)= \lim_{T \rightarrow \infty} \frac{1}{T}\mathbb{E_\pi}\big[\sum_{t=1}^T C(\mathbf{s}_{t},u_{t}; \beta)\big].\label{Lag_main}
\end{align}

We then solve the following average age-plus-cost MDP:

\textit{Problem 2 (Average age-plus-cost MDP):}
\begin{align}
	\bar{L}^\star(\beta)  \triangleq &\min_{\pi\in \Pi}  \bar{L}(\pi;\beta),
	\label{avg_cost}
\end{align}
where $\bar{L}^\star(\beta)$ is the optimal average Lagrangian cost with regard to $\beta$. 


\begin{theorem}
\label{Ori_Lag}
There exists a stationary randomized policy that solves the average-age MDP \eqref{ori_prob}. This policy can be expressed as a mixture of two stationary deterministic policies $\pi_{\beta^\star,1}$ and $\pi_{\beta^\star,2}$, where $\pi_{\beta^\star,1}$ and $\pi_{\beta^\star,2}$ differ in
at most a single state $\mathbf{s}$, and are both optimal policies for the average age-plus-cost MDP \eqref{avg_cost} given $\beta^\star$. The mixture policy $\pi_{\beta^\star}$ uses $\pi_{\beta^\star,1}$ with probability $\mu$ and $\pi_{\beta^\star,2}$ with probability $1-\mu$ in state $s$; it uses either policy (since they coincide) in other states, where $\mu\in [0,1]$
\end{theorem}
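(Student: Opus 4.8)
The plan is to establish this via the standard machinery of constrained Markov decision processes with a single constraint, specialized to the countably infinite state space arising here. First I would argue that under the assumed structure — in particular the fact that for ages above $\delta_L$ the admissible action set collapses to $\{0\}$, so the age process is always driven back to $1$ with probability at least $1-p$ whenever a transmission is attempted and is otherwise bounded in its useful range — the constrained problem \eqref{ori_prob} admits a feasible stationary policy (e.g. transmit whenever allowed, subject to thinning to meet $E_{\max}$) and that the occupation-measure / Lagrangian framework of Altman or Sennott applies. The key structural fact I would invoke is the Lagrangian duality: for the optimal multiplier $\beta^\star\ge 0$, any optimal policy of the unconstrained average-cost problem \eqref{avg_cost} that also satisfies the constraint with equality (or, if the constraint is inactive, $\beta^\star=0$ and any optimal unconstrained policy works) is optimal for \eqref{ori_prob}.

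The core of the argument is then a dimension-counting / convexity observation on the set of achievable $(\bar A,\bar E)$ pairs. I would consider the closure of the set of cost-constraint vectors achievable by stationary policies; by the theory of CMDPs this set is convex, and its extreme points are attained by stationary deterministic policies. The optimal solution of \eqref{ori_prob} lies on the lower boundary of this region at the point where $\bar E = E_{\max}$ (assuming the constraint is active). A point on an edge of a two-dimensional convex region is a convex combination of at most two extreme points, which gives the mixture of two deterministic policies $\pi_{\beta^\star,1},\pi_{\beta^\star,2}$. To see that these two differ in at most one state, I would use the Lagrangian characterization: both must be optimal for \eqref{avg_cost} at $\beta^\star$, hence both must select, in every state, an action minimizing the right-hand side of the average-cost optimality (ACOE) equation; the two policies can therefore differ only in states where this minimizer is non-unique, i.e. where the two actions $0$ and $1$ are both admissible and achieve the same value. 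A further argument — essentially the one used to prove threshold structure in Theorem \ref{opt_stru} — shows there is at most one such "tie" state that needs to be randomized to hit $\bar E = E_{\max}$ exactly, since sweeping the randomization probability $\mu$ at a single well-chosen state already moves $\bar E$ continuously across the relevant range.

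Concretely, the steps in order: (1) verify feasibility of \eqref{ori_prob} and the existence of an average-cost optimal stationary policy for \eqref{avg_cost} for each $\beta\ge 0$, citing the standard conditions (bounded-below costs, a suitable Lyapunov/return-time condition guaranteeing a stable recurrent class) — here the forced return of the age process gives the needed geometric ergodicity; (2) show $\bar L^\star(\beta)$ is concave in $\beta$ and that strong duality holds, so there is $\beta^\star$ with $\bar A^\star = \bar L^\star(\beta^\star) - \beta^\star E_{\max}$; (3) pick two deterministic $\beta^\star$-optimal policies bracketing $E_{\max}$ (one with $\bar E \le E_{\max}$, one with $\bar E \ge E_{\max}$) and form the mixture $\pi_{\beta^\star}$ with weight $\mu$ chosen so that $\bar E(\pi_{\beta^\star}) = E_{\max}$; (4) argue the two bracketing policies can be taken to differ in a single state, using uniqueness of the ACOE minimizer off the tie set plus the monotone/threshold structure; (5) conclude that $\pi_{\beta^\star}$ is feasible and achieves the Lagrangian lower bound, hence is optimal for \eqref{ori_prob}.

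The main obstacle I anticipate is not the convexity argument but the technical justification on the countably infinite state space: average-cost CMDP duality and the existence of stationary optimal policies require care when $\mathcal S$ is not finite and the age cost $C_\Delta$ is unbounded. I would handle this by exploiting the problem-specific structure — namely that the distortion function forces $D(\Delta)=h_L$ for $\Delta\ge\delta_L$ and, since transmission from a high-age state can only occur if $\Lambda_t\ge h_L$ which has fixed positive probability, the useful dynamics are confined to a region where a uniform geometric drift toward $\Delta=1$ holds — so that standard results (e.g. Sennott's conditions, or a vanishing-discount argument on a suitable weighted-sup norm space) deliver the ACOE and the stationary optimal policies. Establishing that the tie state is unique (rather than merely "at most finitely many") is the other delicate point, and I would defer its sharpest form to the proof of the threshold structure in Theorem \ref{opt_stru}, here needing only that a single randomization suffices to meet the constraint, which follows once $\bar E$ is shown to be continuous and monotone in the randomization parameter at an appropriately chosen state.
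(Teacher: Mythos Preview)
Your overall strategy is sound and aligns with the paper's: both rely on Sennott's constrained-MDP framework for countable state spaces. The paper's proof is, however, much terser than your outline --- it does not re-derive the convexity/duality picture or the ``differ in one state'' argument at all, but simply cites \cite{sennott1993constrained} and verifies the five hypotheses A1--A5 required there (finiteness of cost sublevel sets; existence of a reference deterministic policy with a single positive recurrent class and finite costs; accessibility between any two states; unichain structure for any deterministic policy with a recurrent state; and a Slater-type strict-feasibility condition). Each is checked by exhibiting an explicit policy (e.g.\ ``transmit only when $\Lambda=M$'') and computing its cost and recurrence properties. Your steps (1)--(5) essentially reprove Sennott's theorem in this specific setting, which is more illuminating but not what the paper does.

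One factual slip you should fix: you write that ``for ages above $\delta_L$ the admissible action set collapses to $\{0\}$.'' This is false and, indeed, contradicts your own later (correct) statement. For $\Delta\ge\delta_L$ the distortion requirement is $D(\Delta)=h_L\le M$, so $u=1$ remains admissible whenever $\Lambda\ge h_L$, an event of strictly positive probability $F(h_L)$. This is exactly what drives the geometric return to $\Delta=1$ that you need for recurrence; if the action set really collapsed, the age would diverge under every policy and the problem would be ill-posed. Your penultimate paragraph gets this right --- just delete the erroneous clause in the opening paragraph and the argument goes through.
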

\begin{proof}
Please see Appendix \ref{proof_ori_lag}.
\end{proof}
\vspace{0.1cm}

\section{Solving the average age-plus-Cost MDP}
\label{sec: average_cost}
In this section, we investigate optimal policies that solve \eqref{avg_cost} and characterize its unique structure. We also give a more detailed analysis for the special case when the distortion function $D(\cdot)$ is constant (previously we assume $D(\cdot)$ is increasing). 

\subsection{Structure of the optimal policies}
\label{sec: main_res}
In this part, we first study the structure of the optimal policies, and then obtain the average cost function with regard to a threshold.

\begin{table*}[h]
\centering
\footnotesize
\caption{\small{Notations}}
\label{notations}
\begin{tabular}{l}
\hline\hline
$F(r)= \sum_{j=r}^{M}P_\Lambda(j), \ \ \forall \ 0\leq r\leq M$\\ \hline
$B_{l}=1-(1-p)F(h_l), \ \ \forall \ 1\leq l\leq L$\\ \hline
$w(i,j)=\mathbbm{1}_{\{i<j\}}\prod_{v=i}^{j-1}(B_{v})^{\delta_{v+1}-\delta_v}+\mathbbm{1}_{\{i\geq j\}}$
\\ \hline
 $J_{l}=-\frac{1}{(1-B_{{l}-1})^2}-\frac{\delta_{l}-1+\beta F(h_{{l}-1})}{1-B_{{l}-1}}$\\
 \ \ \ \ \ \ $+\sum_{j={l}}^{L}w({l},j)\Big(\frac{1-\mathbbm{1}_{\{j<L\}}(B_{j})^{\delta_{j+1}-\delta_j}\big(1+(\delta_{j+1}-1+\beta F(h_j))(1-B_j)\big)}{(1-B_j)^2}
    +\frac{\delta_j-1+\beta F(h_j)}{1-B_j}\Big), \ \ \forall \ 2\leq l\leq L$ 
 \\\hline $I_{l}=-\frac{1}{1-B_{{l}-1}}+\sum_{j={l}}^{L}\frac{1-\mathbbm{1}_{\{j<L\}}(B_{j})^{\delta_{j+1}-\delta_j}}{1-B_j}w(l,j), \ \ \forall \ 2\leq l\leq L$
 \\ \hline $O_{l}=\frac{1}{(1-B_{{l}-1})^2}+\frac{-1+\beta F(h_{{l}-1})}{1-B_{{l}-1}}, \ \ \forall \ 2\leq l\leq L+1$
 \\ \hline \hline
\end{tabular}
\vspace{-0.4cm}
\end{table*}

\begin{theorem}
\label{opt_stru}
For any fixed $\beta$, there exists an optimal age threshold $\Delta_\beta^\star$ such that the following policy 
\begin{align}
&u^\star(\Delta,\Lambda;\beta)=
  \begin{cases}
  1 & \text{if}\, \, \Delta\geq \Delta_{\beta}^\star \, \ \text{and}\, \ \Lambda\geq D(\Delta),\\
  0 & \text{otherwise},
  \end{cases} \label{optPol}
\end{align}
achieves the minimum average Lagrangian cost in \eqref{avg_cost}. 
\end{theorem}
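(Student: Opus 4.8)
The plan is to pass to the $\alpha$‑discounted version of Problem~2, establish the threshold structure there through a monotonicity argument on the value function, and then transfer the statement to the average‑cost criterion by a vanishing‑discount argument. Fix $\beta$ and $\alpha\in(0,1)$, and let $V_\alpha(\Delta,\Lambda)$ denote the optimal $\alpha$‑discounted cost. Since the ``never transmit'' policy has finite discounted cost ($\sum_{t\ge 0}\alpha^{t}(\Delta+t)<\infty$), $V_\alpha$ is finite and solves the discounted Bellman equation
\[
V_\alpha(\Delta,\Lambda)=\Delta+\min_{u\in A_{(\Delta,\Lambda)}}\Big\{\beta u+\alpha\,\mathbb{E}\big[V_\alpha(\Delta',\Lambda')\big]\Big\},
\]
where $\Lambda'\sim P_\Lambda$ is independent of everything and $\Delta'=1$ with probability $(1-p)u$ and $\Delta'=\Delta+1$ otherwise. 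Writing $\tilde V_\alpha(\Delta):=\mathbb{E}_\Lambda[V_\alpha(\Delta,\Lambda)]$, the continuation term equals $\alpha\tilde V_\alpha(\Delta+1)$ for $u=0$ and $\beta+\alpha\big[(1-p)\tilde V_\alpha(1)+p\,\tilde V_\alpha(\Delta+1)\big]$ for $u=1$. The first step is to show, by induction on the value‑iteration iterates, that $V_\alpha(\cdot,\Lambda)$ is non‑decreasing in $\Delta$ for every $\Lambda$, hence $\tilde V_\alpha$ is non‑decreasing. The induction rests on three facts: the per‑slot cost $\Delta$ is increasing; both continuation terms above are monotone in $\Delta$ whenever the current iterate of $\tilde V_\alpha$ is; and, crucially, because $D(\cdot)$ is increasing we have $A_{(\Delta_2,\Lambda)}\subseteq A_{(\Delta_1,\Lambda)}$ whenever $\Delta_1<\Delta_2$, so raising the age can only shrink the feasible action set and thus raise the minimized value.

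\textbf{Threshold structure, discounted case.} At any state $(\Delta,\Lambda)$ for which $u=1$ is feasible (i.e.\ $\Lambda\ge D(\Delta)$), comparing the two continuation terms shows that $u=1$ is optimal iff $\tilde V_\alpha(\Delta+1)-\tilde V_\alpha(1)\ge \beta/\big(\alpha(1-p)\big)$. The left‑hand side depends on the state only through $\Delta$ and, by the monotonicity just proved, is non‑decreasing in $\Delta$; hence the set of ages at which transmission is (weakly) beneficial is an up‑set in $\mathbb{N}^{+}$, i.e.\ of the form $[\Delta^\star_{\alpha,\beta},\infty)$ (possibly empty). Breaking ties toward $u=1$, the discounted‑optimal policy can therefore be taken to transmit exactly when $\Delta\ge\Delta^\star_{\alpha,\beta}$ and $\Lambda\ge D(\Delta)$, which is the claimed form \eqref{optPol} for the discounted problem.

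\textbf{Vanishing discount.} To recover the average‑cost statement, apply the standard vanishing‑discount machinery for countable‑state average‑cost MDPs (Sennott‑type conditions): fix a reference state, note $V_\alpha\ge 0$, and bound $V_\alpha(\Delta,\Lambda)-V_\alpha(\mathrm{ref})$ above by a function of finite stationary mean, which is possible because a transmitting policy drives the age back to $1$ within a geometrically distributed number of slots from any state. One then extracts a sequence $\alpha_n\uparrow 1$ along which $\lambda^\star:=\lim(1-\alpha_n)V_{\alpha_n}(\mathrm{ref})$ and $h(\Delta,\Lambda):=\lim\big[V_{\alpha_n}(\Delta,\Lambda)-V_{\alpha_n}(\mathrm{ref})\big]$ exist and satisfy the average‑cost optimality equation
\[
\lambda^\star+h(\Delta,\Lambda)=\Delta+\min_{u\in A_{(\Delta,\Lambda)}}\big\{\beta u+\mathbb{E}[h(\Delta',\Lambda')]\big\},
\]
with any minimizing stationary policy being average‑optimal and $\lambda^\star=\bar L^\star(\beta)$. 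The relative value $h(\cdot,\Lambda)$ inherits non‑decreasingness in $\Delta$ from the $V_{\alpha_n}$, so $\tilde h(\Delta):=\mathbb{E}_\Lambda[h(\Delta,\Lambda)]$ is non‑decreasing, and repeating the previous comparison — now $u=1$ is optimal at a feasible state iff $\tilde h(\Delta+1)-\tilde h(1)\ge\beta/(1-p)$ — gives a single age threshold $\Delta^\star_\beta$ and hence exactly the policy \eqref{optPol}.

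\textbf{Main obstacle.} The monotonicity induction and the threshold extraction are routine; I expect the delicate part to be the vanishing‑discount step — verifying the uniform (Sennott‑type) bounds on an infinite state space with \emph{unbounded} one‑stage costs so that finiteness, the monotonicity, and the optimality of the greedy policy all survive the limit $\alpha\to 1$. An alternative that sidesteps some of this is to truncate the age at a large level $N$, establish the threshold structure on the resulting finite MDP exactly as above, and let $N\to\infty$; the care then shifts to handling the distortion constraint near the truncation boundary.
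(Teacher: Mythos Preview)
Your proposal is correct and follows essentially the same route as the paper: pass to the $\alpha$-discounted problem, prove that $V^\alpha(\cdot,\Lambda)$ is non-decreasing in $\Delta$ by induction on the value-iteration iterates (using that $D(\cdot)$ increasing implies $A_{(\Delta_2,\Lambda)}\subseteq A_{(\Delta_1,\Lambda)}$ for $\Delta_1<\Delta_2$), read off the threshold from the fact that $Q^\alpha(\Delta,\Lambda;1)-Q^\alpha(\Delta,\Lambda;0)=\beta-\alpha(1-p)\big[\tilde V_\alpha(\Delta+1)-\tilde V_\alpha(1)\big]$ is non-increasing in $\Delta$ and independent of $\Lambda$, and then invoke Sennott's vanishing-discount conditions to carry the structure to the average-cost problem. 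The paper does exactly this (its Proposition~\ref{existence_discount_00}, Lemma~\ref{property_00}, Lemma~\ref{disc_stru0}, and the verification in Appendix~\ref{proof_cond}); the only cosmetic difference is that the paper also records monotonicity of $V^\alpha$ in $\Lambda$, which you correctly observe is unnecessary since the action comparison is $\Lambda$-free, and that the paper argues the average-cost optimal policy is a limit of the discounted threshold policies rather than re-deriving the threshold from the ACOE as you do---both variants are standard and valid.
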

\begin{proof}
Please see Section \ref{proof_thre1}.
\end{proof}
By Theorem \ref{opt_stru}, the optimal policies for \eqref{avg_cost} are of threshold-type in the age. That is, if $\Lambda_t\!\!<\!\!D(\Delta_t)\!$, then the access point is prohibited to transmit due to the distortion criterion in \eqref{eq:dist-222}. If  $\Lambda_t\geq D(\Delta_t)$ at time $t$,  the optimal policy would transmit if and only if the age $\Delta_t$ exceeds $\Delta_\beta^\star$. 

In Theorem \ref{cost_cal} below, we express the average Lagrangian cost as a function of any given threshold. 
\par Recall that $\delta_l$ is the leftmost point of the $l$-th interval in the distortion requirement function $D(\cdot)$, see Fig.~\ref{fig:distfun_inc} and Sec.~\ref{sec: const}. Given any $k\in\mathbb{N}^+$, define  
\begin{equation}
    l_k\triangleq \min\{l\leq L+1: \delta_l>k\}. \label{eqn:def_lk}
\end{equation}
Broadly speaking, $l_k$ is the inverse of $\delta_l$. The special definition in \eqref{eqn:def_lk} is because the given $D(\cdot)$ function (see Fig.~\ref{fig:distfun_inc}) may have a jump. Since, per our definition $\delta_{L+1}=\infty$, we always have $l_k\leq L+1$. We then have the following results.

\begin{theorem}
\label{cost_cal}
We use policy $\pi_k$ to denote the policy described in \eqref{optPol} but with an arbitrarily given threshold $k\in\mathbb{N}^+$. The long-term average Lagrangian cost of $\pi_k$ is expressed as:
\begin{footnotesize}
\begin{align}
   &\bar{L}(\pi_k;\beta) \notag\\
   =&\frac{0.5k^2-0.5k+(1-B_{{l_k}-1})^{-1}k
    +\mathbbm{1}_{\{l_k\leq L\}}J_{l_k} (B_{l_k-1})^{\delta_{l_k}-k}+O_{l_k}}{k-1+(1-B_{{l_k}-1})^{-1}
    +\mathbbm{1}_{\{l_k\leq L\}}I_{l_k} (B_{l_k-1})^{\delta_{l_k}-k}},\label{eqn:avg_cost}
\end{align}
\end{footnotesize}
for which the expressions of the deterministic sequences $\{B_{l}\}_{l=1}^L$, $\{J_{l}\}_{l=2}^L$, $\{I_{l}\}_{l=2}^L$, and $\{O_{l}\}_{l=2}^{L+1}$ are given in Table \ref{notations}.
\end{theorem}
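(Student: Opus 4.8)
The plan is to compute the long-term average Lagrangian cost of the threshold policy $\pi_k$ by renewal-reward analysis. Since $\pi_k$ transmits exactly when $\Delta_t \ge k$ and $\Lambda_t \ge D(\Delta_t)$, and the age resets to $1$ after a successful transmission, the age process is a regenerative process whose renewal epochs are the slots immediately after a successful reception. By the renewal-reward theorem, $\bar L(\pi_k;\beta) = \mathbb{E}[R]/\mathbb{E}[T]$, where $T$ is the length of a regeneration cycle (the time between two consecutive age resets) and $R = \sum_{t \text{ in cycle}}(\Delta_t + \beta u_t)$ is the accumulated Lagrangian cost over one cycle. So the whole proof reduces to computing the two expectations $\mathbb{E}[T]$ (the denominator in \eqref{eqn:avg_cost}) and $\mathbb{E}[R]$ (the numerator).

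To compute these, first I would describe the cycle explicitly. Within a cycle the age climbs $1,2,3,\dots$; for ages $\Delta < k$ the policy never transmits, so those $k-1$ slots are deterministic and contribute $\sum_{\Delta=1}^{k-1}\Delta = 0.5k^2 - 0.5k - 0$ (careful with the off-by-one: it is $\sum_{1}^{k-1}$ if the reset age is $1$) to the reward and $k-1$ to the cycle length. Once $\Delta \ge k$, in each slot with current age $\Delta$ the policy attempts a transmission with probability $F(D(\Delta))$ (the chance that $\Lambda_t \ge D(\Delta)$, using the definition $F(r)=\sum_{j=r}^M P_\Lambda(j)$), and given an attempt, it succeeds with probability $1-p$; hence the one-step ``escape'' probability from age $\Delta$ is $(1-p)F(D(\Delta)) = 1 - B_{l}$ when $\Delta \in [\delta_l,\delta_{l+1})$, i.e. $\ell = \ell_\Delta$. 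Because $D(\cdot)$ is piecewise constant with breakpoints $\delta_l$, the age-$\ge k$ portion of the cycle is a ``killed'' process whose killing rate is constant $1-B_l$ on each interval $[\delta_l,\delta_{l+1})$ and changes at the $\delta_l$'s. I would then set up the tail-sum/telescoping computation: using $w(i,j)$ (the product of survival probabilities $\prod (B_v)^{\delta_{v+1}-\delta_v}$ from the $i$-th interval up to the $(j-1)$-th) to express the probability of reaching the $j$-th interval alive, and summing geometric series of the form $\sum_{\Delta}\Delta\,(\text{survival prob})$ over each interval. This is exactly where the auxiliary quantities $J_l$, $I_l$, $O_l$ in Table~\ref{notations} come from: $I_l$ collects the ``length'' contributions (sums of survival probabilities) of the cycle from interval $l$ onward, $J_l$ collects the ``age-weighted'' contributions, and $O_l$ handles the boundary term contributed by the partial interval that contains $k$ (the geometric sum from $k$ to $\delta_{l_k}-1$, which is why $(B_{l_k-1})^{\delta_{l_k}-k}$ appears as a prefactor on $J_{l_k}$ and $I_{l_k}$). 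The indicator $\mathbbm{1}_{\{l_k \le L\}}$ distinguishes the case where $k$ already lies in the last (unbounded) interval $[\delta_L,\infty)$, in which case the cycle after age $k$ is a single geometric random variable with success $1-B_L$ and no further structure is needed — this degenerate case should be checked separately as a sanity check.

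Concretely, the key steps in order are: (1) invoke the renewal-reward theorem to reduce $\bar L(\pi_k;\beta)$ to $\mathbb{E}[R]/\mathbb{E}[T]$ over one age-reset cycle; (2) split each of $\mathbb{E}[T]$ and $\mathbb{E}[R]$ into the deterministic pre-threshold part ($\Delta < k$) and the stochastic post-threshold part ($\Delta \ge k$); (3) for the post-threshold part, condition on which interval $[\delta_l,\delta_{l+1})$ the age is in, write the survival probability to interval $j$ as $w(l_k,j)$ times the partial-survival factor $(B_{l_k-1})^{\delta_{l_k}-k}$, and evaluate the resulting finite geometric sums for both the count $\mathbb{E}[T]$ and the age-weighted sum within $\mathbb{E}[R]$; (4) also add the transmission-energy term $\beta \sum_t u_t$: the expected number of transmission attempts (or successes — one must be careful which) per cycle is $\beta$ times a sum of the form $\sum_\Delta F(D(\Delta))\cdot(\text{survival prob})$, which is precisely why the terms $\beta F(h_{l-1})$ and $\beta F(h_j)$ appear inside $O_l$, $J_l$; (5) collect terms and match them to the closed forms $J_{l_k}$, $I_{l_k}$, $O_{l_k}$ in Table~\ref{notations} to obtain \eqref{eqn:avg_cost}. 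The main obstacle will be the bookkeeping in step (3)–(4): getting the interval boundaries, the off-by-one in $\sum_{\Delta=1}^{k-1}$ versus starting at $\Delta=1$ after a reset, the exponents $\delta_{j+1}-\delta_j$ versus $\delta_{j+1}-1$, and the distinction between counting transmission \emph{attempts} (probability $F$) and transmission \emph{successes} (probability $(1-p)F$) all exactly right so that the messy partial sums telescope into the compact $J_l,I_l,O_l$ expressions; the renewal-reward skeleton itself is routine, and the piecewise structure of $D(\cdot)$ is the only genuinely new complication over the single-threshold age-minimization literature.
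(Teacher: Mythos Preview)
Your renewal-reward approach is correct and will produce exactly the expression \eqref{eqn:avg_cost}, but it is a genuinely different route from the one the paper takes. The paper works directly with the stationary distribution of the age under $\pi_k$: it writes down the balance equations for $z_\Delta \triangleq \sum_\Lambda \mathbb{P}(\mathbf{s}=(\Delta,\Lambda))$, obtains the geometric recursion $z_{\Delta+1} = B_l\, z_\Delta$ on each interval $[\delta_l,\delta_{l+1})$ (with $z_{\Delta+1}=z_\Delta$ for $\Delta<k$), solves for $z_1$ via the normalization $\sum_\Delta z_\Delta=1$, and then evaluates $\bar L(\pi_k;\beta)=\sum_{\Delta\ge1}\big(\Delta+\beta\,\mathbbm{1}_{\{\Delta\ge k\}}F(D(\Delta))\big)z_\Delta$. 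Your cycle-based computation and the paper's stationary-distribution computation are two sides of the same coin: the survival probability $\mathbb{P}(\text{age reaches }\Delta)$ in your step~(3) equals $z_\Delta/z_1$, and the normalizing constant $1/z_1$ is precisely your $\mathbb{E}[T]$, so the ratio $\mathbb{E}[R]/\mathbb{E}[T]$ and the weighted sum $\sum (\text{cost})\,z_\Delta$ coincide term by term. The stationary-distribution route is slightly more mechanical (no need to argue regeneration or invoke renewal-reward), while your approach makes the structure of the numerator and denominator more transparent and is the standard device in the AoI literature; either way the heavy lifting is the same piecewise-geometric bookkeeping you describe in steps~(3)--(5), and your identification of $I_l$, $J_l$, $O_l$ with the length, age-weighted, and boundary contributions is accurate. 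One small clarification on your step~(4): the energy cost counts \emph{attempts} ($u_t=1$), not successes, so the per-slot contribution at age $\Delta\ge k$ is $\beta F(D(\Delta))$ times the survival probability, exactly as reflected in the $\beta F(h_\cdot)$ terms inside $O_l$ and $J_l$.
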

\begin{proof}
Please see Appendix \ref{opt_thres0}.
\end{proof}

Even though the closed-form expression of $\bar{L}(\pi_k;\beta)$ greatly simplifies the problem, finding the optimal threshold $\Delta_\beta^\star$ through numerical search is still quite challenging since (i) the expression of \eqref{eqn:avg_cost} versus the threshold $k$ may exhibit complicated behavior (e.g., we have found some scenarios where \eqref{eqn:avg_cost} is neither convex nor concave) and (ii) the domain of $k$ is unbounded. Later in Theorem~\ref{new}, we prove that in a restricted sub-domain $k\geq \delta_L$, we can analytically find the best $k_\text{UB}$. Therefore, to search for the optimal $\Delta_\beta^\star$, we only need to exhaustively evaluate $\bar{L}(\pi_k;\beta)$ for all $k\in \{1,2,\cdots,\delta_L-1\}$, compare their values to $\bar{L}(\pi_{k_\text{UB}};\beta)$ in Theorems~\ref{cost_cal} and~\ref{new}, and then find the globally minimum $k^\star$. 

\begin{theorem}
\label{new}
Among the sub-domain $k\geq \delta_L$, the $k$ that leads to the smallest $\bar{L}(\pi_k;\beta)$ is given by
\begin{align}
    k_\text{UB} &\triangleq \argmin_{k\geq \delta_L}\bar{L}(\pi_k;\beta)=\max\{\delta_L,y\}, \label{eqn:k0}
\end{align}
where 
\begin{small}
\begin{align}
    y\!=\!\Bigg\lceil\!\!-\!\frac{1+B_L}{2(1-B_L)}\!+\!\left(\frac{B_L^2}{(1-B_L)^2}\!+\!\frac{B_L\!+\!2\beta F(h_L)}{1-B_L}\!+\!\frac{1}{4}\right)^{0.5}\!\Bigg \rceil.
\end{align}
\end{small}
\end{theorem}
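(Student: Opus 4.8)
The plan is to restrict the cost formula \eqref{eqn:avg_cost} to the sub-domain $k\ge\delta_L$, where it collapses to an elementary rational function, and then to minimize that function over the integers by a first-difference argument. First observe that for $k\ge\delta_L$ the definition \eqref{eqn:def_lk} forces $l_k=L+1$, so the indicator $\mathbbm{1}_{\{l_k\le L\}}$ vanishes and $B_{l_k-1}=B_L$, $O_{l_k}=O_{L+1}$. Writing $c\triangleq (1-B_L)^{-1}-1=B_L/(1-B_L)\ge 0$ (we work in the non-degenerate regime $B_L<1$, i.e.\ $F(h_L)>0$, since otherwise no transmission is ever possible once $\Delta\ge\delta_L$), and noting that the denominator of \eqref{eqn:avg_cost} equals $k+c$, Theorem~\ref{cost_cal} reduces to
\[
\bar L(\pi_k;\beta)=g(k)\triangleq\frac{\tfrac12 k^2+(c+\tfrac12)k+O_{L+1}}{k+c}.
\]

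Next I would put $g$ in canonical form. Dividing the numerator by $k+c$ and using $O_{L+1}=\frac{1}{(1-B_L)^2}+\frac{-1+\beta F(h_L)}{1-B_L}=(c+1)\big(c+\beta F(h_L)\big)$ gives, after routine algebra,
\[
g(k)=\tfrac12(k+c)+\frac{R}{k+c}+\tfrac12,\qquad R\triangleq(c+1)\big(\tfrac{c}{2}+\beta F(h_L)\big)>0,
\]
so $g$ is strictly convex on the feasible region $k+c\ge\delta_L+c>0$. The discrete minimization is then handled through the first difference
\[
g(k+1)-g(k)=\tfrac12-\frac{R}{(k+c)(k+c+1)} .
\]
Since $\phi(k)\triangleq(k+c)(k+c+1)$ is strictly increasing for $k\ge 1$ (its vertex lies at $-c-\tfrac12<0$), the sign of $g(k+1)-g(k)$ switches from negative to non-negative exactly as $k$ crosses $k_+$, the larger root of $\phi(k)=2R$, namely $k_+=\sqrt{2R+\tfrac14}-(c+\tfrac12)$. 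Hence $g$ is non-increasing over the integers up to $\lceil k_+\rceil$ and non-decreasing afterwards, so $\lceil k_+\rceil$ is its unconstrained integer minimizer. A short substitution — $2R+\tfrac14=\frac{B_L^2}{(1-B_L)^2}+\frac{B_L+2\beta F(h_L)}{1-B_L}+\tfrac14$ and $c+\tfrac12=\frac{1+B_L}{2(1-B_L)}$ — identifies $\lceil k_+\rceil$ with the quantity $y$ in the statement.

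Finally I would impose the constraint $k\ge\delta_L$. If $y\ge\delta_L$, then $y$ is feasible and, being the global integer minimizer of $g$, also minimizes $g$ over $\{k\ge\delta_L\}$. If $y<\delta_L$, then $\delta_L>y=\lceil k_+\rceil\ge k_+$, so $\phi(\delta_L)>2R$ and, by the displayed first difference, $g(k+1)\ge g(k)$ for every integer $k\ge\delta_L$; hence $g$ is non-decreasing on $\{k\ge\delta_L\}$ and $\delta_L$ is the minimizer. In either case $\argmin_{k\ge\delta_L}\bar L(\pi_k;\beta)=\max\{\delta_L,y\}=k_\text{UB}$.

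The conceptual steps are short; the real labor is the algebraic verification — performing the division by $k+c$, confirming the closed form of $R$ from $O_{L+1}$, and checking that $2R+\tfrac14$ and $c+\tfrac12$ reproduce exactly the radicand and offset in $y$ — together with keeping the non-degeneracy hypotheses ($B_L<1$, equivalently $F(h_L)>0$, and $\beta\ge 0$) visible so that $R>0$ and strict convexity hold. A secondary point requiring care is the passage from the real minimizer of the convex $g$ to the integer argmin and the correct ceiling in $y$, which is why I prefer arguing via the first difference rather than via a continuous relaxation plus rounding.
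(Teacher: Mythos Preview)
Your proof is correct and follows essentially the same route as the paper's: both simplify $\bar L(\pi_k;\beta)$ for $k\ge\delta_L$ to an elementary rational function of $k$, analyze the sign of its first difference (which in either presentation reduces to the same quadratic in $k$), and then impose the constraint $k\ge\delta_L$ via the $\max$. Your intermediate canonical form $g(k)=\tfrac12(k+c)+R/(k+c)+\tfrac12$ is a slightly tidier packaging than the paper's direct computation of $\mathcal F(k+1)-\mathcal F(k)$, but the substance is identical.
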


\begin{proof}
Please see Appendix \ref{proof_min_ub}.
\end{proof}

\subsection{Special Case - Constant Distortion}
\label{sec:special_case}
Our results in Theorems~\ref{Ori_Lag} to \ref{new} hold for any increasing distortion requirement function $D(\cdot)$. We now consider a special case of constant $D(\Delta)=h\leq M$, $\forall \Delta$ (equivalently when $L=1$), arguably the most important scenario in practice since it says that the access-point only forwards the aggregated sample when its quality (distortion) meets a constant threshold. 

\begin{corollary}
\label{cor: special_case}
Given $\beta$ and constant distortion function $D(\Delta)=h$, the optimal scheduler of problem \eqref{avg_cost} is analytically described as follows:
\begin{align}
&u_{\text{cons}}^{\star}(\Delta,\Lambda;\beta)=
  \begin{cases}
  1 & \text{if}\, \,\Delta\geq \Delta_{\beta,\text{cons}}^\star\, \ \text{and}\, \ \Lambda\geq h,\\
  0 & \text{otherwise},
  \end{cases}
\end{align}
where the optimal threshold
$\Delta_{\beta,\text{cons}}^\star$ is given by 
\begin{footnotesize}
\begin{align}
    \Delta_{\beta,\text{cons}}^\star\!\! =
    \!\max\Bigg\{\!1,\!\Bigg\lceil\!\!-\!\frac{1+R}{2(1\!-\!R)}\!+\!\sqrt{ \!\frac{R^2}{(1\!-\!R)^2}\!+\!\frac{R\!+\!2\beta W}{1\!-\!R}\!+\!\frac{1}{4}}\Bigg \rceil\Bigg\},
    \label{opt_threval_spec}
\end{align}
\end{footnotesize}
where $W=\sum_{j=h}^{M}P_\Lambda(j)$ and $R=1-(1-p) W$.
\end{corollary}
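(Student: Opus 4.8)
\emph{Proof sketch (plan).} The policy \emph{structure} asserted in the corollary is immediate: specializing Theorem~\ref{opt_stru} to the constant distortion function $D(\Delta)\equiv h$ (i.e.\ $L=1$) already gives that some age threshold $\Delta^\star_{\beta,\mathrm{cons}}$ exists for which ``transmit iff $\Delta\ge\Delta^\star_{\beta,\mathrm{cons}}$ and $\Lambda\ge h$'' is optimal for \eqref{avg_cost}. Hence the only remaining task is to exhibit the optimal threshold in closed form.

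The plan is to reduce the problem to a one-dimensional integer minimization and then invoke Theorem~\ref{new}. By Theorems~\ref{opt_stru} and~\ref{cost_cal}, an optimal threshold is any minimizer over $k\in\mathbb{N}^+$ of $\bar L(\pi_k;\beta)$ in \eqref{eqn:avg_cost}. Now use that $L=1$ forces $\delta_L=\delta_1=1$: the finite search set $\{1,\dots,\delta_L-1\}$ discussed right after Theorem~\ref{cost_cal} is empty, while the sub-domain $k\ge\delta_L$ of Theorem~\ref{new} is all of $\mathbb{N}^+$. Therefore $k_{\mathrm{UB}}$ from \eqref{eqn:k0} is \emph{already} the global minimizer of $\bar L(\pi_k;\beta)$ over $\mathbb{N}^+$, so $\Delta^\star_{\beta,\mathrm{cons}}=k_{\mathrm{UB}}=\max\{\delta_L,y\}=\max\{1,y\}$. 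It then only remains to substitute $L=1$ into Table~\ref{notations}: there $B_L=B_1=1-(1-p)F(h_1)=1-(1-p)W=R$ and $F(h_L)=F(h_1)=W$, which turns the expression for $y$ in Theorem~\ref{new} into exactly the bracketed quantity in \eqref{opt_threval_spec}.

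As an independent sanity check (and an alternative that does not lean on Theorem~\ref{new}), one can specialize \eqref{eqn:avg_cost} directly. When $L=1$ we have $l_k\equiv 2$ for every $k\ge 1$, so $\mathbbm{1}_{\{l_k\le L\}}=0$ and \eqref{eqn:avg_cost} collapses to $\bar L(\pi_k;\beta)=\bigl(\tfrac12 k^2-\tfrac12 k+c\,k+O_2\bigr)\big/\bigl(k-1+c\bigr)$ with $c\triangleq(1-R)^{-1}$ and $O_2=c^2+(\beta W-1)c$. The change of variable $m\triangleq k-1+c$ then yields $\bar L=\tfrac12 m+\tfrac12+A/m$ with $A\triangleq\tfrac12 c^2+(\beta W-\tfrac12)c\ (\ge 0)$, which is convex in the integer argument $k$; hence the minimizer (subject to $k\ge 1$) is the smallest $k$ with nonnegative forward difference, i.e.\ $\max\{1,\lceil\tfrac12-c+\sqrt{\tfrac14+2A}\,\rceil\}$, and the elementary identities $\tfrac12-c=-\tfrac{1+R}{2(1-R)}$ and $\tfrac14+2A=\tfrac{R^2}{(1-R)^2}+\tfrac{R+2\beta W}{1-R}+\tfrac14$ reproduce \eqref{opt_threval_spec}.

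I do not expect a genuine obstacle here: the mathematical content is supplied entirely by Theorems~\ref{opt_stru}, \ref{cost_cal} and~\ref{new}. The only points needing a little care are (i) confirming that Theorem~\ref{new}'s sub-domain $k\ge\delta_L$ really exhausts $\mathbb{N}^+$ once $L=1$, so that its minimizer is global rather than merely local; and (ii), in the alternative route, verifying the discriminant identity and checking that convexity of $\bar L(\pi_k;\beta)$ in the integer $k$ legitimizes the single ceiling (rather than comparing $\lfloor\cdot\rfloor$ and $\lceil\cdot\rceil$).
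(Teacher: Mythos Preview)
Your proposal is correct and matches the paper's (implicit) approach: the corollary is presented without proof precisely because it is the $L=1$ specialization of Theorems~\ref{opt_stru}, \ref{cost_cal}, and~\ref{new}, and your key observation that $\delta_L=\delta_1=1$ makes the sub-domain of Theorem~\ref{new} exhaust $\mathbb{N}^+$ is exactly the intended mechanism. Your independent sanity-check route---collapsing \eqref{eqn:avg_cost} via $l_k\equiv2$, changing variables to $m=k-1+c$, and minimizing $\tfrac12 m+\tfrac12+A/m$---is a clean self-contained derivation that the paper does not give; it is essentially a re-derivation of Theorem~\ref{new} in this one-parameter setting and the algebra you sketch (in particular $\tfrac12-c=-\tfrac{1+R}{2(1-R)}$ and $\tfrac14+2A=\tfrac{R^2}{(1-R)^2}+\tfrac{R+2\beta W}{1-R}+\tfrac14$) checks out.
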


\textbf{Remark:} By Corollary \ref{cor: special_case}, the optimal threshold $\Delta_{\beta,\text{cons}}^\star$ depends on the Lagrangian multiplier $\beta$, distortion requirement $h$, the distribution of the random number of received measurements $\Lambda$ and channel unreliability $p$. By simple algebraic simplification of \eqref{opt_threval_spec}, it is easy to show that when $\beta=0$, which corresponds to the case without energy constraint, we have $\Delta_{\beta,\text{cons}}^\star=1$. In addition, $\Delta_{\beta,\text{cons}}^\star$ is non-decreasing with $\beta$. This is because the increase of $\beta$ implies higher weights of the energy cost. To save energy cost, the threshold should be increased to reduce the transmission frequency.

To provide more insights on the special case with constant distortion, we investigate how the measurements arrival probability $W=\mathbb{P}(\Lambda\geq h)$ and the access-point-to-receiver erasure probability $p$ affect the optimal threshold $\Delta_{\beta,\text{cons}}^\star$ in Theorem \ref{lem:special_case}.

\begin{theorem}
\label{lem:special_case}
Given $\beta>0$ and $D(\Delta)=h$, we have the following properties: 

(i) If $\beta >\frac{1}{W}$, the optimal threshold $\Delta_{\beta,\text{cons}}^\star$ is increasing with respect to (w.r.t.) $p$; if $\beta <\frac{1}{W}$, the optimal threshold $\Delta_{\beta,\text{cons}}^\star$ is decreasing w.r.t.~$p$;

(ii) In fact, we can further strengthen the second half of (i) by the following: If $\beta <\frac{1}{W}$, the optimal threshold $\Delta_{\beta,\text{cons}}^\star=1$; 

(iii) The optimal threshold $\Delta_{\beta,\text{cons}}^\star$ is increasing w.r.t.~$W$.
\end{theorem}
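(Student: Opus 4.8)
The plan is to work directly from the closed-form expression in Corollary~\ref{cor: special_case} and first simplify it through a change of variables. Since $1-R=(1-p)W$, set $x\triangleq\frac{1}{1-R}=\frac{1}{(1-p)W}$; a short computation gives $-\frac{1+R}{2(1-R)}=\tfrac12-x$, $\frac{R^2}{(1-R)^2}=(x-1)^2$, and $\frac{R+2\beta W}{1-R}=(x-1)+2\beta W x$. The key observation is that $2\beta W x=\frac{2\beta}{1-p}$, so after completing the square the radicand becomes $(x-\tfrac12)^2+c$ with a $W$-free residual, and
\begin{equation*}
\Delta_{\beta,\text{cons}}^\star=\max\bigl\{1,\lceil g\rceil\bigr\},\qquad
g=\tfrac12-x+\sqrt{\bigl(x-\tfrac12\bigr)^2+c},\qquad c\triangleq\tfrac{2\beta}{1-p}>0,
\end{equation*}
or equivalently $g=\tfrac12-x+\sqrt{x^2+(a-1)x+\tfrac14}$ with $a\triangleq 2\beta W$. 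Because $t\mapsto\lceil t\rceil$ and $t\mapsto\max\{1,t\}$ are nondecreasing, it is enough to track the monotonicity of the scalar $g$; note that for $p\in[0,1)$ and $W\in(0,1]$ one has $x\ge 1/W\ge 1$.

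For (i) and (ii), fix $W$ and view $g$ as a function of $x$, which is strictly increasing in $p$, with $a=2\beta W$ held fixed. Then $g'(x)=-1+\frac{2x+a-1}{2\sqrt{x^2+(a-1)x+1/4}}$; on $x\ge1$ the numerator $2x+a-1$ is positive, so squaring is justified and $g'(x)\ge0\iff(2x+a-1)^2\ge4\bigl(x^2+(a-1)x+\tfrac14\bigr)\iff(a-1)^2\ge1\iff a\ge2$. Thus $g$ is strictly increasing in $p$ when $a=2\beta W>2$ (i.e.\ $\beta>1/W$) and strictly decreasing when $\beta<1/W$; composing with the two nondecreasing maps proves (i). For (ii), when $\beta<1/W$ (so $a<2$) I bound $g$ directly: $\sqrt{x^2+(a-1)x+\tfrac14}<x+\tfrac12$ is, after squaring the positive quantities, equivalent to $(a-2)x<0$, which holds, while $g>0$ since $\sqrt{(x-\tfrac12)^2+ax}>x-\tfrac12$; hence $g\in(0,1)$, $\lceil g\rceil=1$, and $\Delta_{\beta,\text{cons}}^\star=\max\{1,1\}=1$.

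For (iii), fix $p$ and $\beta$ and use the form $g=\tfrac12-x+\sqrt{(x-\tfrac12)^2+c}$: the crucial point is that $c=\tfrac{2\beta}{1-p}$ is independent of $W$, so the entire $W$-dependence of $g$ sits in $x=\frac{1}{(1-p)W}$, which is strictly decreasing in $W$. Since $g'(x)=-1+\frac{x-1/2}{\sqrt{(x-1/2)^2+c}}<0$ for every $x$ (the fraction is $<1$ because $c>0$), $g$ is strictly decreasing in $x$ and hence strictly increasing in $W$; applying the nondecreasing maps $\lceil\cdot\rceil$ and $\max\{1,\cdot\}$ then shows $\Delta_{\beta,\text{cons}}^\star$ is nondecreasing in $W$.

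The only genuine obstacle is the opening step: recognizing the substitution $x=1/(1-R)$ together with the cancellation $2\beta Wx=2\beta/(1-p)$, which collapses the unwieldy radicand into $(x-\tfrac12)^2+c$. Once that normal form is in hand, the rest is elementary single-variable calculus, with only the minor cautions that one should restrict to $x\ge1$ before squaring in (i) and should treat $\lceil\cdot\rceil$ and $\max\{1,\cdot\}$ as monotone envelopes rather than differentiating through them.
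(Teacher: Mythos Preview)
Your proof is correct. The approach differs from the paper's in a useful way: rather than computing the partial derivatives $\partial\mathcal{G}/\partial p$ and $\partial\mathcal{G}/\partial W$ of the raw expression in Corollary~\ref{cor: special_case} and then comparing squares of the resulting terms to determine the sign (which is what the paper does), you first pass to the normalized variable $x=1/(1-R)$ and exploit the cancellation $2\beta W x=2\beta/(1-p)$. This collapses the problem to a single-variable function $g(x)$ whose two equivalent forms each isolate one parameter: with the residual $c=2\beta/(1-p)$ the expression is $W$-free, so (iii) follows from $g'(x)<0$ immediately; with $a=2\beta W$ it is $p$-free (for fixed $W$), so (i) and (ii) reduce to the elementary inequality $(a-1)^2\gtrless 1$, i.e.\ $\beta W\gtrless 1$. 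The paper's direct-derivative route reaches the same critical value $\beta W=1$ but through heavier algebra on the original two-variable expression; your substitution buys a cleaner argument and makes it structurally clear \emph{why} $1/W$ is the threshold.
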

\begin{proof}
Please see Appendix \ref{app: special_case}.
\end{proof}

\textbf{Remark}
Note $\frac{1}{W}$ {\em is the expected duration until the next time (slot) that the access point can receive  enough measurements to meet the distortion requirement.} Roughly speaking, if the access point does not send an aggregated update at this time, it will have to wait $\frac{1}{W}$ time slots before the next time it receives enough measurements to meet the distortion requirement. Therefore, $\frac{1}{W}$ can be viewed as the {\em cost of suspension}. Also recall that the Lagrangian multiplier $\beta$ can be viewed as the energy price of one transmission. Jointly, the intuition of the theorem can be explained as follows:
\begin{itemize}
    \item 
In the case $\beta >\frac{1}{W}$, the energy cost precedes the suspension cost. Then, as $p$ increases (channel worsens), it is better to reduce the transmission frequency in order to save energy while sacrificing slightly the age performance. As a result, the optimal threshold increases as stated in (i) of Theorem~\ref{lem:special_case}. On the other hand, if $\beta <\frac{1}{W}$, then the suspension cost dominates. The optimal threshold will decrease to ensure that we transmit more frequently to maintain a small average age (at the cost of increased energy consumption).
\item If $p=0$ (erasure probability is zero), we only need to consider the cost of sending one aggregated update. If we also have $\beta<\frac{1}{W}$, it means that the age cost outweighs the energy cost, which implies the optimal threshold is 1. Together with the second half  of (i), we will have (ii).
\item 
The intuition of (iii) is as follows. For any given $\beta$, $W$, the optimal threshold $\Delta^\star_{\beta,\text{cons}}$ would balance the {\em marginal} age cost and the {\em marginal} energy cost in \eqref{avg_cost} so that any perturbation of the threshold in either the positive or negative direction will decrease the performance. Consider a slightly larger $W'>W$. Since $\frac{1}{W}$ is the expected duration between two consecutive slots when the distortion requirement is met, the duration under $W'$ would be slightly smaller. Note that using the new $W’$ would increase the marginal energy cost (since we send more frequently) but decrease the marginal age cost (since we send more frequently). As a result, to re-balance the two marginal costs, the optimal policy would further increase $\Delta^\star_{\beta,\text{cons}}$ under the new $W'$.
\end{itemize}


\subsection{Proof of Theorem \ref{opt_stru}}
\label{proof_thre1}
A method to study average cost MDPs is to relate them to discounted cost MDPs. In this section, we (i) define discounted cost MDPs; (ii) obtain optimal policies for the discounted cost MDPs; and (iii) extend the results to average cost MDPs.

Given discount factor $\alpha \in (0, 1)$ and an initial state $\mathbf{s}$, the total expected discounted Lagrangian cost under a policy $\pi\in\Pi$ is 
\begin{align}
	&L_{\mathbf{s}}^{\alpha}(\pi;\beta)= \limsup_{T \rightarrow \infty} \mathbb{E}_{\pi}\big[\sum_{t=1}^T \alpha^{t-1}C(\mathbf{s}_{t},u_{t};\beta)|\mathbf{s}\big],
	\label{disc_cost0}
\end{align}

Then, the optimization problem of minimizing the total expected discounted Lagrangian cost can be cast as

\textit{Problem 3 (Discounted cost MDP):}
\begin{equation}
	V^{\alpha}(\mathbf{s})\triangleq\min_{\pi\in\Pi} \ L_{\mathbf{s}}^\alpha(\pi;\beta),
	\label{disc_cost_opt_00}
\end{equation}
where $V^{\alpha}(\mathbf{s})$ denotes the optimal total expected $\alpha$-discounted Lagrangian cost (for convenience, we omit $\beta$ in $V^{\alpha}(\mathbf{s})$).

We now introduce the optimality equation of $V^{\alpha}(\mathbf{s})$.
\begin{proposition}
\label{existence_discount_00}
	\noindent	(a) The optimal total expected $\alpha$-discounted Lagrangian cost, given by $V^{\alpha}(\Delta,\Lambda)$, satisfies the optimality equation as follows:
	\begin{align}
		V^{\alpha}\left(\Delta,\Lambda\right)=\min_{u\in A_{(\Delta,\Lambda)}}  Q^{\alpha}\left(\Delta,\Lambda;u\right), \label{opt_equ_disc_00}
	\end{align}	 
	where
	\begin{align}		Q^{\alpha}\left(\Delta,\Lambda;0\right)\!=&\Delta\!+\!\alpha \mathbb{E}V^{\alpha}\left(\Delta+1,\Lambda'\right);\label{q1_}\\
		Q^{\alpha}\left(\Delta,\Lambda;1\right)\!=&\Delta\!+\!\beta
		\!+\!\alpha \Big(p \mathbb{E}V^{\alpha}\!\left(\Delta\!+\!1, \Lambda'\right)\!+\!(1\!-\!p)\mathbb{E}V^{\alpha}\left(1,\Lambda'\right)\!\Big).\label{q2_}
	\end{align}
	
\noindent(b) A stationary deterministic policy determined by the right-hand-side of \eqref{opt_equ_disc_00} solves problem \eqref{disc_cost_opt_00}.
	
\noindent (c) Let $V_n^{\alpha}(\mathbf{s})$ be the cost-to-go function such that
$V_0^{\alpha}(\mathbf{s})\!=\!0$, for all $\mathbf{s}\in\mathcal{S}$ and for $n\geq 0$,
		\begin{align}
		V_{n+1}^{\alpha}(\Delta,\Lambda)=\min_{u\in A_{(\Delta,\Lambda)}} Q_{n+1}^{\alpha}(\Delta,\Lambda;u),
		\label{iteration00}
	    \end{align}
	    where
	    \begin{align}
		\!\!Q_{n+1}^{\alpha}\!\left(\Delta,\Lambda;0\right)\!=&\Delta\!+\!\alpha\mathbb{E} V_{n}^{\alpha}\left(\Delta+1,\Lambda'\right);\label{q1}\\
		\!\!Q_{n+1}^{\alpha}\!\left(\Delta,\Lambda;1\right)\!=&\Delta\!+\!\beta\!+\!\alpha \Big(p\mathbb{E} V_{n}^{\alpha}\left(\Delta\!+\!1, \Lambda'\!\right)\!+\!(1\!-\!p)\mathbb{E}V_{n}^{\alpha}		\left(1,\Lambda'\right)\!\Big).\label{q2}
	    \end{align}
	    Then, we have $V_n^{\alpha}(\mathbf{s}) \rightarrow V^{\alpha}(\mathbf{s})$ as $n\rightarrow \infty$, $\forall \mathbf{s}, \alpha$.
\end{proposition}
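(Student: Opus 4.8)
The plan is to invoke the standard machinery for discounted-cost MDPs, with the one nonstandard ingredient being the \emph{unbounded} per-stage cost $C_\Delta(\mathbf{s}_t,u_t)=\Delta_t$. I would prove the three parts in the order (c) $\Rightarrow$ (a) $\Rightarrow$ (b): first show value iteration converges, then obtain the optimality equation by passing to the limit, and finally verify optimality of the induced stationary policy by a verification argument.

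First I would establish finiteness together with a linear growth bound on $V^\alpha$. Since $D(\cdot)\ge 1$ and $\Lambda_t\ge 0$, the action $u=0$ is admissible in every state, so the ``never transmit'' policy $\pi_0$ is feasible; under $\pi_0$ the age increases deterministically by one per slot, so $L^\alpha_{(\Delta,\Lambda)}(\pi_0;\beta)=\sum_{t\ge 1}\alpha^{t-1}(\Delta+t-1)=\frac{\Delta}{1-\alpha}+\frac{\alpha}{(1-\alpha)^2}$. Combining this with $C(\mathbf{s}_1,u_1;\beta)\ge\Delta$ yields $1\le\Delta\le V^\alpha(\Delta,\Lambda)\le b(\Delta,\Lambda)$ for the bounding function $b(\Delta,\Lambda)=\frac{\Delta}{1-\alpha}+\frac{\alpha}{(1-\alpha)^2}=c_1\Delta+c_2$.

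Next, for part (c), let $\mathcal{T}$ denote the operator on the right-hand side of \eqref{iteration00}, so that $V^\alpha_{n+1}=\mathcal{T}V^\alpha_n$ with $V^\alpha_0\equiv 0$. Since costs are nonnegative, $V^\alpha_1=\mathcal{T}V^\alpha_0\ge 0=V^\alpha_0$, and $\mathcal{T}$ is monotone, so $\{V^\alpha_n\}$ is nondecreasing; an induction comparing against the truncated $\pi_0$-cost shows $V^\alpha_n\le b$. By monotone convergence $V^\alpha_n\uparrow V^\alpha_\infty$ with $V^\alpha_\infty\le b$ finite. Because $\Lambda'$ has finite support $\{0,\dots,M\}$ and the action set $\{0,1\}$ is finite, the expectation $\mathbb{E}V^\alpha_n(\cdot,\Lambda')$ and the minimization both commute with the monotone limit, giving $V^\alpha_\infty=\mathcal{T}V^\alpha_\infty$; this is exactly \eqref{opt_equ_disc_00}--\eqref{q2_} for $V^\alpha_\infty$. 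To identify $V^\alpha_\infty$ with $V^\alpha$: since $V^\alpha_n$ is the optimal $n$-horizon discounted cost, $V^\alpha_n\le L^\alpha_{\mathbf{s}}(\pi;\beta)$ for all $\pi\in\Pi$, hence $V^\alpha_\infty\le V^\alpha$; conversely, let $\pi^\star$ be the stationary deterministic policy selecting a minimizer on the right-hand side of $\mathcal{T}V^\alpha_\infty$ (which exists, the action set being finite), iterate $V^\alpha_\infty(\mathbf{s})=\mathbb{E}_{\pi^\star}[\sum_{t=1}^{n}\alpha^{t-1}C(\mathbf{s}_t,u_t;\beta)+\alpha^n V^\alpha_\infty(\mathbf{s}_{n+1})\mid\mathbf{s}_1=\mathbf{s}]$, and note that $V^\alpha_\infty\le b$ and $\mathbb{E}_{\pi^\star}[\Delta_{n+1}\mid\mathbf{s}]\le\Delta+n$ force the tail term $\alpha^n\mathbb{E}_{\pi^\star}[V^\alpha_\infty(\mathbf{s}_{n+1})\mid\mathbf{s}]\le\alpha^n(c_1(\Delta+n)+c_2)\to 0$. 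Letting $n\to\infty$ gives $V^\alpha_\infty(\mathbf{s})=L^\alpha_{\mathbf{s}}(\pi^\star;\beta)\ge V^\alpha(\mathbf{s})$, so $V^\alpha_\infty=V^\alpha$, which is (c). Part (a) is then the fixed-point identity $V^\alpha=\mathcal{T}V^\alpha$ just proved, and the same chain of equalities shows $\pi^\star$ attains $L^\alpha_{\mathbf{s}}(\pi^\star;\beta)=V^\alpha(\mathbf{s})$, which is (b).

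The main obstacle is the unboundedness of the per-stage cost: the textbook contraction-mapping proof of the discounted Bellman equation presumes bounded costs, so the whole argument has to live in the weighted space associated with the linear bounding function $b$, and the only genuinely delicate step is controlling the discounted tail $\alpha^n\mathbb{E}_{\pi^\star}[V^\alpha_\infty(\mathbf{s}_{n+1})]$, where one needs the linear-in-$n$ bound on $\mathbb{E}_{\pi^\star}[\Delta_{n+1}]$ to beat the geometric factor $\alpha^n$. Everything else — monotonicity of $\mathcal{T}$, interchanging the monotone limit with the (finite) expectation over $\Lambda'$ and with the (finite) minimization, and the measurable selection of $\pi^\star$ — is routine because the action space is finite and $\Lambda$ has finite support.
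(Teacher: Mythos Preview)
Your argument is correct; it is a self-contained proof of exactly what the paper obtains by citation. The paper's proof is a one-line appeal to Sennott's discounted-cost theory: it observes that, by \cite{sennott1989average}, all three parts follow once one exhibits a single feasible stationary deterministic policy with finite $\alpha$-discounted cost for every $\alpha$ and every initial state, and then verifies this with the very same ``never transmit'' policy $\pi_0$ you use, obtaining the identical bound $L^\alpha_{(\Delta,\Lambda)}(\pi_0;\beta)=\frac{\Delta}{1-\alpha}+\frac{\alpha}{(1-\alpha)^2}$.

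Your route differs in that you unpack Sennott's black box: monotone value iteration, the fixed-point identity via exchange of the monotone limit with the finite expectation/minimization, and the verification step killing the tail via the linear bound $\mathbb{E}[\Delta_{n+1}]\le\Delta+n$. The key analytic ingredient---the linear-in-$\Delta$ envelope $b(\Delta,\Lambda)$ furnished by $\pi_0$---is common to both, so the substance is the same; your version is longer but independent of the external reference, while the paper's is terser at the cost of relying on the reader knowing (or trusting) that the cited hypothesis really does deliver (a)--(c) in this unbounded-cost, countable-state setting.
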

\begin{proof}
According to \cite{sennott1989average}, it suffices to show that there exists a stationary deterministic policy $f\in\Pi$ such that for all $\alpha$ and $\mathbf{s}$, we have $L_{\mathbf{s}}^\alpha(f;\beta)\!<\!\!\infty$.
Let $f$ be a policy that chooses $u=0$ at every time slot. Obviously, $f$ satisfies the distortion constraint and thus $f\in\Pi$. Further, given the initial state $\mathbf{s}_1=(\Delta,\Lambda)$, we have 
\begin{align}
	L_{\mathbf{s}_1}^\alpha(f;\beta)&= \limsup_{T \rightarrow \infty} \mathbb{E}_f\big[\sum_{t=1}^T \alpha^{t-1}\Delta_t|\mathbf{s}_{1}\big]\notag\\
	&=\sum_{n=0}^{\infty}\alpha^n(\Delta+n)\notag\\
	&=\frac{\Delta}{1-\alpha}+\frac{\alpha}{(1-\alpha)^2}<\infty.\notag
\end{align}
\end{proof}

Using the induction method in (c) of Proposition \ref{existence_discount_00}, we first show some properties of $V^{\alpha}(\mathbf{s})$ in Lemma \ref{property_00}.
\begin{lemma}
\label{property_00}
Given $\alpha$, the value function $V^{\alpha}(\mathbf{s})$ has properties:

(i) The value function $V^{\alpha}(\mathbf{s})$ is increasing w.r.t.~$\Delta$.

(ii) The value function $V^\alpha(\mathbf{s})$ is decreasing w.r.t.~$\Lambda$.
\end{lemma}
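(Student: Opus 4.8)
The plan is to prove both monotonicity statements for the finite-horizon value-iteration iterates $V_n^\alpha$ by induction on $n$ and then send $n\to\infty$, invoking part~(c) of Proposition~\ref{existence_discount_00} together with the elementary fact that pointwise limits preserve weak monotonicity. The one feature that makes this more than a textbook monotone-dynamic-programming argument is the state-dependent admissible action set: recall from Section~\ref{components} that $A_{(\Delta,\Lambda)}=\{0,1\}$ when $\Lambda\ge D(\Delta)$ and $A_{(\Delta,\Lambda)}=\{0\}$ otherwise, so that---because $D(\cdot)$ is non-decreasing---the set $A_{(\Delta,\Lambda)}$ \emph{shrinks} in $\Delta$ and \emph{grows} in $\Lambda$ under inclusion. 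I will track these two inclusions explicitly.

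For the base case, $V_0^\alpha\equiv 0$ satisfies both claims trivially. For the inductive step, assume $V_n^\alpha$ is increasing in $\Delta$. First I would establish a per-action monotonicity: for each fixed $u$ and fixed $\Lambda$, the map $\Delta\mapsto Q_{n+1}^\alpha(\Delta,\Lambda;u)$ from \eqref{q1}--\eqref{q2} is increasing, since the explicit linear term is increasing in $\Delta$ and every occurrence of $V_n^\alpha$ there is evaluated either at $(\Delta+1,\Lambda')$ (increasing in $\Delta$ by hypothesis) or at the $\Delta$-free point $(1,\Lambda')$, and taking expectation over $\Lambda'\sim P_\Lambda$ preserves this. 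To compare $V_{n+1}^\alpha$ at $(\Delta_1,\Lambda)$ and $(\Delta_2,\Lambda)$ with $\Delta_1<\Delta_2$, let $u^\star$ attain the minimum in \eqref{iteration00} at $(\Delta_2,\Lambda)$; because $A_{(\Delta_2,\Lambda)}\subseteq A_{(\Delta_1,\Lambda)}$, the action $u^\star$ is also admissible at $(\Delta_1,\Lambda)$, so
\[
V_{n+1}^\alpha(\Delta_1,\Lambda)\ \le\ Q_{n+1}^\alpha(\Delta_1,\Lambda;u^\star)\ \le\ Q_{n+1}^\alpha(\Delta_2,\Lambda;u^\star)\ =\ V_{n+1}^\alpha(\Delta_2,\Lambda),
\]
the middle inequality being the per-action monotonicity just shown. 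This yields (i) for $V_{n+1}^\alpha$.

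For (ii), the key observation is that the right-hand sides of \eqref{q1} and \eqref{q2} do not depend on the current $\Lambda$ at all: the only $\Lambda$ there is the next-slot variable $\Lambda'$, whose law is the fixed $P_\Lambda$. Hence $Q_{n+1}^\alpha(\Delta,\Lambda;u)$ is independent of $\Lambda$, and the entire $\Lambda$-dependence of $V_{n+1}^\alpha(\Delta,\Lambda)$ enters only through the minimization domain $A_{(\Delta,\Lambda)}$. Since $A_{(\Delta,\Lambda_1)}\subseteq A_{(\Delta,\Lambda_2)}$ for $\Lambda_1<\Lambda_2$, minimizing the same function over a larger set gives $V_{n+1}^\alpha(\Delta,\Lambda_2)\le V_{n+1}^\alpha(\Delta,\Lambda_1)$, which is (ii). Finally, letting $n\to\infty$ transfers both properties to $V^\alpha$ by Proposition~\ref{existence_discount_00}(c). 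I expect the main obstacle to be nothing deep but rather getting the bookkeeping right---the two set inclusions point in opposite directions for the two claims, and one must insert the competitor action $u^\star$ into the correct value function---after which the induction is routine and needs no convexity or finer structure of the iterates.
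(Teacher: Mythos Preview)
Your proposal is correct and follows essentially the same approach as the paper's proof: both proceed by induction on the value-iteration iterates $V_n^\alpha$, exploit the per-action monotonicity of $Q_{n+1}^\alpha$ in $\Delta$ together with the inclusion $A_{(\Delta_2,\Lambda)}\subseteq A_{(\Delta_1,\Lambda)}$ for $\Delta_1<\Delta_2$, and for (ii) use that $Q_{n+1}^\alpha(\Delta,\Lambda;u)$ is in fact independent of $\Lambda$ so that only the action-set inclusion matters. The only cosmetic difference is that the paper verifies the comparison action-by-action ($u=0$ and $u=1$ separately), whereas you phrase it once via the optimal action $u^\star$ at the larger-$\Delta$ state; the content is identical.
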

\begin{proof}
Please see Appendix \ref{proof_prop_dis_subprob}.
\end{proof}
\vspace{0.2cm}

Using the properties in Lemma \ref{property_00}, we further show that the optimal policies that solve discounted cost MDPs in \eqref{disc_cost_opt_00} are of threshold-type in Lemma \ref{disc_stru0}.
\begin{lemma}
\label{disc_stru0}
Given $\beta,\alpha$, the optimal policy that solves the discounted cost MDP \eqref{disc_cost_opt_00} is of threshold-type in the age. Specifically, there exists a threshold $\Delta_{\alpha,\beta}^{\star}$ such that it is optimal to transmit only when the age exceeds the threshold and the distortion requirement is met, i.e., $ \Delta\geq \Delta_{\alpha,\beta}^{\star} $ and $\Lambda\geq D(\Delta)$.
\end{lemma}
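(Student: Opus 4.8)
The plan is to read off the switching structure directly from the optimality equation \eqref{opt_equ_disc_00} together with the monotonicity of $V^{\alpha}$ from Lemma~\ref{property_00}. First I would handle the states that violate the distortion requirement: whenever $\Lambda<D(\Delta)$ the admissible set is $A_{(\Delta,\Lambda)}=\{0\}$, so $u=0$ is the only feasible (hence optimal) action and there is nothing to show. It therefore remains to describe, among states $(\Delta,\Lambda)$ with $\Lambda\ge D(\Delta)$, on which ones $u=1$ attains the minimum in \eqref{opt_equ_disc_00}. By \eqref{q1_} and \eqref{q2_}, $u=1$ is optimal there iff $Q^{\alpha}(\Delta,\Lambda;1)\le Q^{\alpha}(\Delta,\Lambda;0)$; subtracting the two expressions, the cost term $\Delta$ cancels and the term $\alpha\mathbb{E}V^{\alpha}(\Delta+1,\Lambda')$ from $Q^{\alpha}(\cdot;0)$ partially cancels the term $\alpha p\,\mathbb{E}V^{\alpha}(\Delta+1,\Lambda')$ from $Q^{\alpha}(\cdot;1)$, leaving the scalar condition
\begin{align*}
\beta\ \le\ \alpha(1-p)\,g(\Delta),\qquad\text{where}\quad g(\Delta)\triangleq \mathbb{E}V^{\alpha}(\Delta+1,\Lambda')-\mathbb{E}V^{\alpha}(1,\Lambda').
\end{align*}
Here the expectations are over $\Lambda'\sim P_{\Lambda}$, which is state independent, and this is what makes the right-hand side a function of $\Delta$ alone.

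The crucial step is then the monotonicity of $g$. By Lemma~\ref{property_00}(i), $V^{\alpha}(\Delta+1,\lambda)\ge V^{\alpha}(\Delta,\lambda)$ for every $\lambda$, so taking expectations over $\Lambda'$ gives $g(\Delta)\ge g(\Delta-1)$ for all $\Delta\ge 2$, and also $g(\Delta)\ge 0$. Hence the set $\{\Delta\in\mathbb{N}^+:\beta\le\alpha(1-p)g(\Delta)\}$ is upward closed in $\mathbb{N}^+$. Define $\Delta_{\alpha,\beta}^{\star}$ to be its smallest element, with the convention $\Delta_{\alpha,\beta}^{\star}=+\infty$ (``never transmit'') when the set is empty, which can only happen in degenerate cases such as $p=1$. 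For $\Delta\ge\Delta_{\alpha,\beta}^{\star}$ we have $g(\Delta)\ge g(\Delta_{\alpha,\beta}^{\star})\ge\beta/(\alpha(1-p))$, so $Q^{\alpha}(\Delta,\Lambda;1)\le Q^{\alpha}(\Delta,\Lambda;0)$ and transmitting is optimal whenever additionally $\Lambda\ge D(\Delta)$; for $\Delta<\Delta_{\alpha,\beta}^{\star}$ the reverse strict inequality holds and not transmitting is the unique minimizer. Breaking ties toward transmission when equality occurs, the optimal policy is exactly $u^{\star}=1\Longleftrightarrow\Delta\ge\Delta_{\alpha,\beta}^{\star}$ and $\Lambda\ge D(\Delta)$, which is the claimed threshold-type structure.

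I do not expect a serious obstacle here, since this is the standard ``monotone value function implies monotone control'' argument, but a few points need care. One is verifying that the partial cancellation really leaves a quantity that does not depend on $\Lambda$, which hinges on $\Lambda_t$ being drawn i.i.d.\ from $P_{\Lambda}$ independently of the current state. A second is the boundary/degenerate cases: $\beta=0$ forces $\Delta_{\alpha,\beta}^{\star}=1$, while $p=1$ (or a very large $\beta$) can make the switching set empty so that $\Delta_{\alpha,\beta}^{\star}=+\infty$, and these should be stated as part of the convention rather than ignored. A third is that, because $D(\cdot)$ is age dependent, a fixed $\Lambda$ may satisfy $\Lambda\ge D(\Delta)$ for small $\Delta$ but fail it for large $\Delta$; this is precisely why the characterization is phrased as the conjunction $\Delta\ge\Delta_{\alpha,\beta}^{\star}$ \emph{and} $\Lambda\ge D(\Delta)$ and not as a pure age threshold. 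Finally, this lemma is the discounted-cost counterpart of Theorem~\ref{opt_stru}; the extension to the average-cost problem \eqref{avg_cost} will be obtained in the remainder of Section~\ref{proof_thre1} by letting $\alpha\uparrow 1$ and invoking part (c) of Proposition~\ref{existence_discount_00} together with a standard limiting argument on the thresholds.
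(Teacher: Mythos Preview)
Your proposal is correct and follows essentially the same approach as the paper: both compute the difference $Q^{\alpha}(\Delta,\Lambda;1)-Q^{\alpha}(\Delta,\Lambda;0)=\beta-\alpha(1-p)\big(\mathbb{E}V^{\alpha}(\Delta+1,\Lambda')-\mathbb{E}V^{\alpha}(1,\Lambda')\big)$, observe that it depends only on $\Delta$ and not on $\Lambda$, and then invoke the monotonicity of $V^{\alpha}$ in $\Delta$ from Lemma~\ref{property_00}(i) to obtain the threshold structure. The paper presents this as two separate steps (monotonicity in $\Delta$ for fixed $\Lambda$, then independence from $\Lambda$), whereas you merge them by defining $g(\Delta)$ up front, but the content is the same.
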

\begin{proof}
Please see Appendix \ref{proof_disc_stru}.
\end{proof}

 By \cite{sennott1989average}, under certain conditions (A proof of these conditions verification is provided in Appendix \ref{proof_cond}), the optimal policy for problem \eqref{avg_cost} can be viewed as a limit of a sequence of the optimal policies for the $\alpha$-discounted cost problems in \eqref{disc_cost_opt_00} as $\alpha\to 1$. Thus, there exist stationary deterministic policies that solve problem \eqref{avg_cost}, and the optimal policies are in the form \eqref{optPol}.

\section{Low-complexity Algorithm for average-age MDP}
\label{sec:alg_design}
In this section, we design a low-complexity algorithm for the average-age MDP. In particular, we first design a low-complexity algorithm to obtain the optimal policy for the average age-plus-cost MDP \eqref{avg_cost} given $\beta$, and then provide a way to determine optimal Lagrangian multiplier $\beta$.
\subsection{Optimal policy for average age-plus-cost MDP \eqref{avg_cost}}
In Theorem \ref{opt_stru}, we show that the optimal policy for \eqref{avg_cost} is of a threshold-type in the form of \eqref{optPol}. In order to obtain the optimal policy for \eqref{avg_cost}, it remains to obtain optimal threshold $\Delta_\beta^\star$. Using Theorems~\ref{cost_cal} and \ref{new}, we can find $\Delta_\beta^\star$ by the following Algorithm~\ref{alg0}. 
\vspace{-0.2cm}
\begin{algorithm}
\SetAlCapFnt{\footnotesize}
\caption{\footnotesize Optimal Threshold Calculator}
\label{alg0}    
\LinesNumbered
\footnotesize 
\textbf{Input:} $\{\delta_j\}_{j=1}^L$, $\{h_j\}_{j=1}^L$, $l=2$, $k^\star=\infty$, $\bar{L}^\star=\infty$, \;
\For{$k=1$ to $\delta_L$}{
\eIf{$k=\delta_L$}{
Calculate $k_\text{UB}$ using \eqref{eqn:k0} and $\bar{L}(\pi_{k_\text{UB}};\beta)$ using \eqref{eqn:avg_cost}\;
\If{$\bar{L}(\pi_{k_\text{UB}};\beta)<\bar{L}^\star$}{
$\bar{L}^\star=\bar{L}(\pi_{k_\text{UB}};\beta)$\, $k^\star=k_\text{UB}$\;}
}{
\If{$k=\delta_{l-1}$}{
Calculate $B_{l-1}$, $O_l$, $I_l$ and $J_l$ using Table \ref{notations}\;
$l=l+1$\;
}
Calculate $\bar{L}(\pi_k;\beta)$ using \eqref{eqn:avg_cost}\;
\If{$\bar{L}(\pi_k;\beta)<\bar{L}^\star$}{
$\bar{L}^\star=\bar{L}(\pi_{k};\beta)$\, $k^\star=k$\;}
}
}
\textbf{Output:} $k^\star$, $\bar{L}^\star$;
\end{algorithm}

\textbf{Remark:} Relative value iteration (RVI) is a classical way to solve the average cost MDP. However, RVI requires updating value functions of all states in each iteration. Since the state space is infinite in this problem, the state space would have to be truncated before applying RVI, which could introduce a significant error from the optimal solution. Compared with RVI, the advantage of the proposed algorithm \ref{alg0} is as follows:

(i) In Algorithm \ref{alg0}, we avoid dealing with infinite space. Instead, we compare average costs of threshold-type policies with \emph{finite} different thresholds.  

(ii) In terms of complexity, each iteration of RVI takes $O(MX)$, where $X$ is the bound for the age and should be larger than $\delta_L$ to reduce offset from the optimal solution. However, Algorithm \ref{alg0} takes $O(\delta_L+M)$ to find the optimal threshold /policy for \eqref{avg_cost}, which {\em has less complexity than even two iterations of RVI}.

\subsection{Lagrangian multiplier estimate}
In Theorem \ref{Ori_Lag}, we have shown that the optimal policy for the average-age MDP \eqref{ori_prob} is a mixture of two stationary deterministic policies $\pi_{\beta^\star,1}$ and $\pi_{\beta^\star,2}$, each of which is optimal for the average age-plus-cost MDP \eqref{avg_cost}. 

Our idea is to construct two sequences $\beta_n$ and $\beta_n'$ such that they satisfy
$\beta_n\leq \beta_n'$, $\beta_n \uparrow \beta^\star$ and $\beta'_n\downarrow \beta^\star$. By \cite{ceran2019average}, in practice, we can use optimal policies $\pi^\star_{\beta^-}$ and $\pi^\star_{\beta^+}$ to approximate policies $\pi_{\beta^\star,1}$ and $\pi_{\beta^\star,2}$, where $\beta^-\triangleq \beta_n$ and $\beta^+ \triangleq \beta'_n$ for reasonably large $n$; and $\pi^\star_{\beta^-}$ and $\pi^\star_{\beta^+}$ are optimal policies that solve \eqref{avg_cost} associated with $\beta^-$ and $\beta^+$, respectively. With $\beta^-$ and $\beta^+$, the randomization factor $\mu$ is calculated by
\begin{align}
    \mu=\frac{E_\text{max}-\bar{E}(\pi^\star_{\beta^+})}{\bar{E}(\pi^\star_{\beta^-})-\bar{E}(\pi^\star_{\beta^+})},
    \label{mu_cal}
    \vspace{-0.3cm}
\end{align}
where $\bar{E}(\pi)$ is the average energy cost under policy $\pi$ as defined in \eqref{avg_energy}. Then, the optimal policy for the average-age MDP chooses $\pi^\star_{\beta^-}$ with probability $\mu$ and $\pi^\star_{\beta^+}$ with probability $1-\mu$ after each successful delivery.

In Algorithm \ref{alg1}, we provide a way to obtain parameters $\beta^+$, $\beta^-$ and $\mu$, which determines the optimal policy for the average-age MDP \eqref{ori_prob}. In particular, we use the bisection method to obtain $\beta^-$ and $\beta^+$ that follows the methodology in \cite{sennott1993constrained}. 
In Algorithm~\ref{alg1}, the expression of the average energy cost with a threshold $k$ is 
\begin{small}
\begin{align}
    &\bar{E}(\pi_k)=\Bigg\{\frac{ F(h_{{l_k}-1})}{1-B_{{l_k}-1}}+\mathbbm{1}_{\{l_k\leq L\}}(B_{{l_k}-1})^{\delta_{l_k}-k}\bigg(-\frac{ F(h_{{l_k}-1})}{1-B_{{l_k}-1}}\notag\\
    &+\sum_{j=l_k}^{L}w(l_k,j) F(h_j)\frac{1-\mathbbm{1}_{\{j<L\}}(B_{j})^{\delta_{j+1}-\delta_j}}{1-B_j}\bigg)\Bigg\}z^{-1},
    \label{avg_cost_cal}
\end{align}
\end{small}

where $z=k-1+(1-B_{{l_k}-1})^{-1}+\mathbbm{1}_{\{l_k\leq L\}}I_{l_k} (B_{{l_k}-1})^{\delta_{l_k}-k}$ and $l_k$ is defined previously in \eqref{eqn:def_lk}. The derivation of \eqref{avg_cost_cal} is part of the proof of Theorem \ref{cost_cal}.

\begin{algorithm}
\SetAlCapFnt{\footnotesize}
\caption{\footnotesize Low-complexity Optimal Transmission Scheduler Estimator}
\label{alg1}    
\LinesNumbered
\footnotesize 
\textbf{Input:} $\epsilon>0$, $\beta^{-}$, $\beta^{+}$ \;
\While{$|\beta^+-\beta^-|>\epsilon$}{
$\beta=(\beta^++\beta^-)/2$\;
For the new $\beta$, obtain optimal threshold $k^\star$using Algorithm~\ref{alg0}\;
Calculate average energy cost using \eqref{avg_cost_cal}\;
 \eIf{$\bar{E}(\pi_{k^\star})>E_{\text{max}}$}{
 $\beta^-=\beta$\;}{
 $\beta^+=\beta$\;
 }
 }
 Compute $\mu$ using \eqref{mu_cal}\;
\textbf{Output:} $\mu$, $\beta^+$ and $\beta^-$\;
\end{algorithm}

\vspace{-0.4cm}
\section{Simulations}
\label{sec:simu}
In this section, we numerically evaluate the performance of
the proposed algorithms.
\subsection{Optimal threshold for problem \eqref{avg_cost}}
\label{sec:simu1}
To provide more insights, we investigate how the optimal threshold varies with $\beta$ and error probabilities, respectively. 
\subsubsection{Constant distortion requirement}
We first simulate the special case. In the simulation, we set $h=5$, $M=10$ and $q_m=q$, $\forall m$. Fig. \ref{fig:opt_thre_f} studies the optimal threshold versus $W$ given $p=0.5$ and $\beta \in\{5,10,20\}$, where $W$ is changed by changing $q$. From Fig. \ref{fig:opt_thre_f}, we observe that the optimal threshold increases with $W$. Fig. \ref{fig:opt_thre_p} studies the optimal threshold versus $p$ given $\beta=10$ and $W\in\{0.03,0.36,0.83\}$. From Fig. \ref{fig:opt_thre_p}, we observe that the optimal threshold increases with $p$ for $W=0.36$ or $W=0.83$ ($\beta>\frac{1}{W}$). Moreover, for all cases in Figs. \ref{fig:opt_thre_f} and \ref{fig:opt_thre_p} that satisfy $\beta<\frac{1}{W}$, the optimal threshold is one. These observations confirm our theoretical results in Theorem~\ref{lem:special_case}.
\begin{figure}
    \centering    \includegraphics[width=0.42\textwidth]{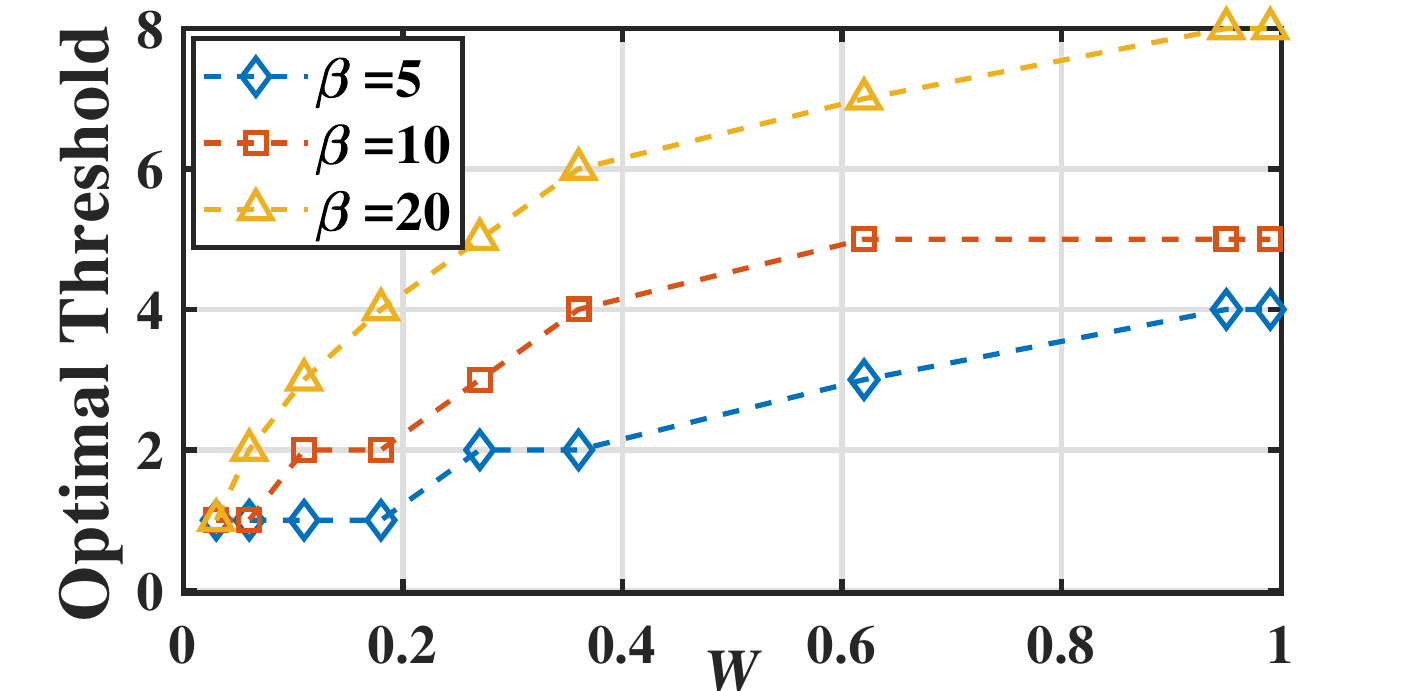}
    \caption{Optimal threshold $\Delta^\star_{\beta,\text{cons}}$ vs $W$ given $p=0.5$}
    \label{fig:opt_thre_f}
    \vspace{-0.5cm}
\end{figure}  

\begin{figure}
    \centering    \includegraphics[width=0.42\textwidth]{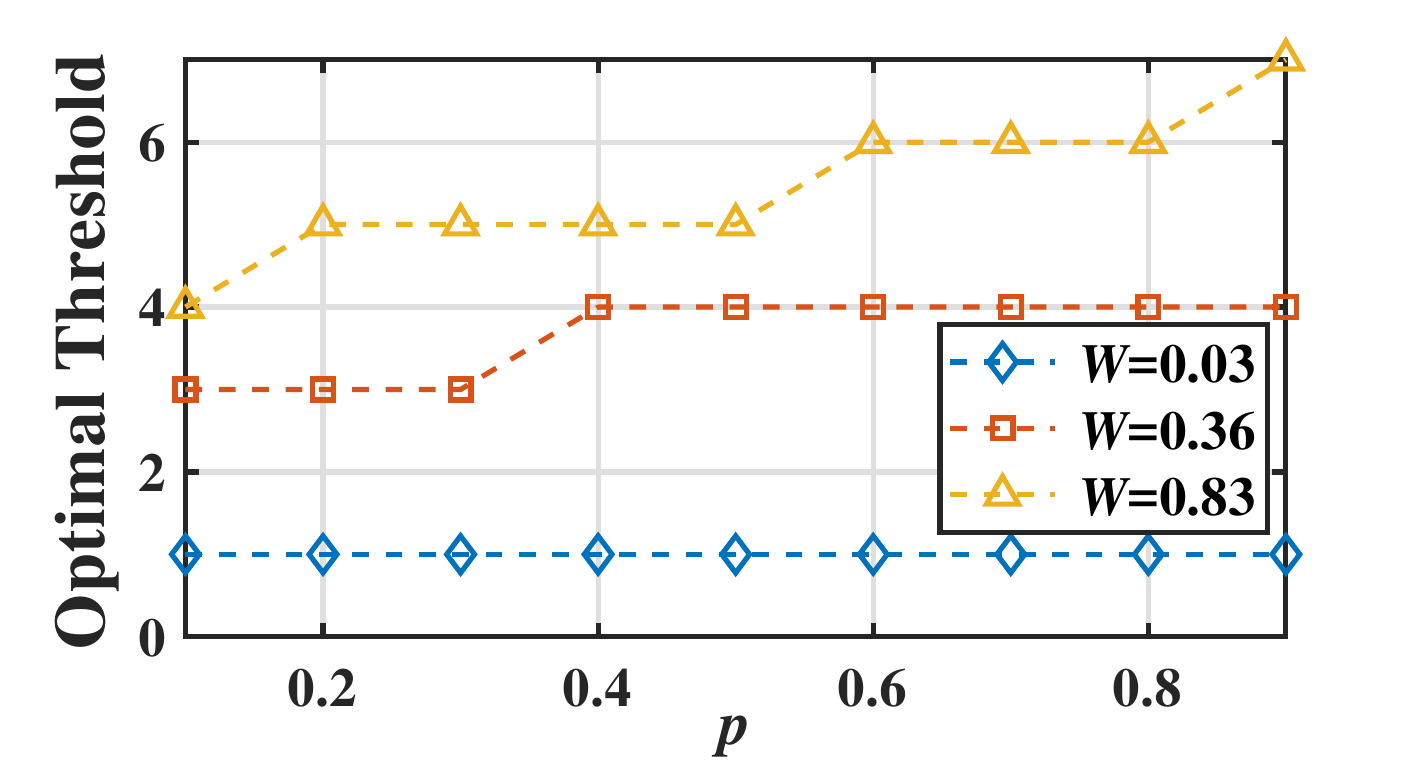}
    \caption{Optimal threshold $\Delta^\star_{\beta,\text{cons}}$ vs $p$ given $\beta=10$}
    \label{fig:opt_thre_p}
   \vspace{-0.7cm}
\end{figure} 

\subsubsection{Non-decreasing age-vs-distortion requirement}
For general non-decreasing distortion function, we set $M=8$, $L=3$, and $q_m=q$, $\forall m$. The distortion function is given by $\delta_1=1, \delta_2=25, \delta_3=50$ and $h_1=2, h_2=5, h_3=7$. Also see Fig.~\ref{fig:distfun_inc}. 

Under the above mentioned parameter settings, we obtain the optimal threshold for different $\beta\in \{5, 25, 45\}$, and error probabilities $p\in \{0.1, 0.2,\cdots 0.9\}$ and $q \in \{0.1, 0.2,\cdots 0.9\}$ using Algorithm \ref{alg0}.
The results are summarized in Fig. \ref{fig:OptThre}.
In each sub-figure of Fig. \ref{fig:OptThre}, we investigate the impact of $p$ and $q$ on the optimal threshold. On one side, we observe that the optimal threshold decreases with $q$. This is consistent with the analysis result in the constant $D(\cdot)$ case in Sec.~\ref{sec:special_case} even though we do not have any analytical proof of this phenomenon. The intuition is that the increase of $q$ will reduce the probability that the distortion requirement is satisfied. This increases the demand for more transmission opportunities to maintain low age, which reduces the optimal threshold. On the other side, we observe that the optimal threshold either increases or deceases or first increases and then decreases with $p$ (see $q=0.6$ in Figs.~\ref{fig:thre_beta5},~\ref{fig:thre_beta25}, and \ref{fig:thre_beta45}). Whether the optimal threshold increases with $p$ depends on whether the age or the energy cost is the dominant issue, also see our discussion right after Theorem~\ref{new}. In particular, when the dominant issue to deal with is the energy cost, the optimal threshold increases with $p$. This is because increasing $p$ implies increasing energy consumption for a successful update. To save energy, the optimal threshold is increased. When the dominant issue to deal with is the age, the optimal threshold decreases with $p$. This is because that the increase of $p$ implies that more transmission attempts are needed for a successful delivery. To keep the age low, the optimal threshold should be reduced to provide more transmission opportunities.

Moreover, comparing Figs. \ref{fig:thre_beta5}, \ref{fig:thre_beta25} and \ref{fig:thre_beta45}, we observe that the optimal threshold increases with $\beta$. This is because as $\beta$ increases, more weights are placed on the energy cost, which requires increasing the threshold to reduce the energy cost.
\ifreport
 \begin{figure*}[]
 \centering
 \subfloat[$\beta=5$]{
 \includegraphics[scale=.35]{Age_dist/OptThre_N8_beta5.eps}
 \label{fig:thre_beta5}
 }\ 
 \subfloat[$\beta=25$]{
 \includegraphics[scale=.35]{Age_dist/OptThre_N8_beta25.eps}
 \label{fig:thre_beta25}
 }\\  
 \subfloat[$\beta=45$]{
 \includegraphics[scale=.35]{Age_dist/OptThre_N8_beta45.eps}
 \label{fig:thre_beta45}
 }
 \caption{The optimal threshold $\Delta_\beta^\star$ vs transmission error probabilities with different values of $\beta$}
 \label{fig:OptThre}
\end{figure*}
\else
\begin{figure*}[]
 \subfloat[$\beta=5$]{
 \includegraphics[scale=.28]{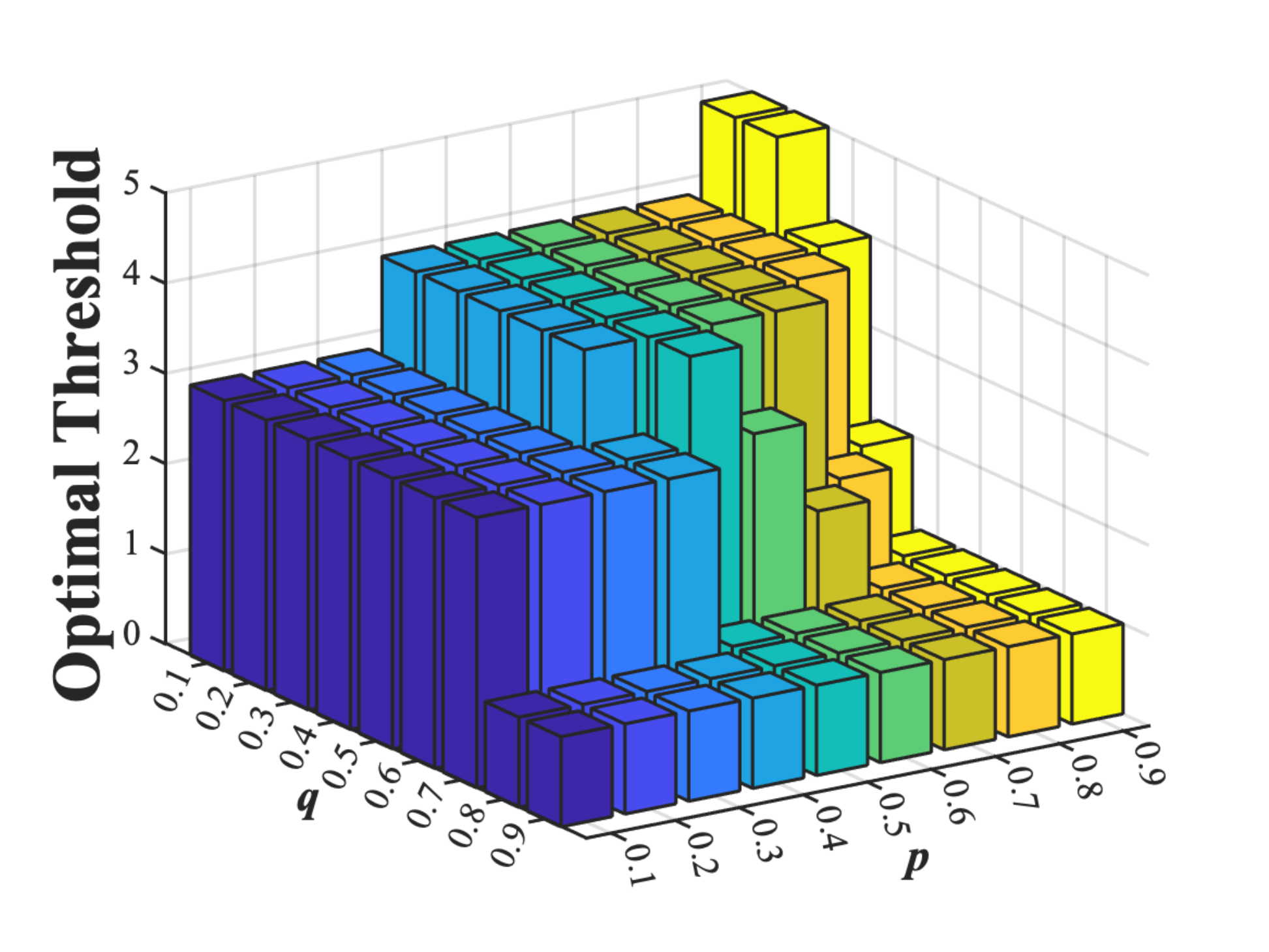}
 \label{fig:thre_beta5}
 }\ 
 \subfloat[$\beta=25$]{
 \includegraphics[scale=.28]{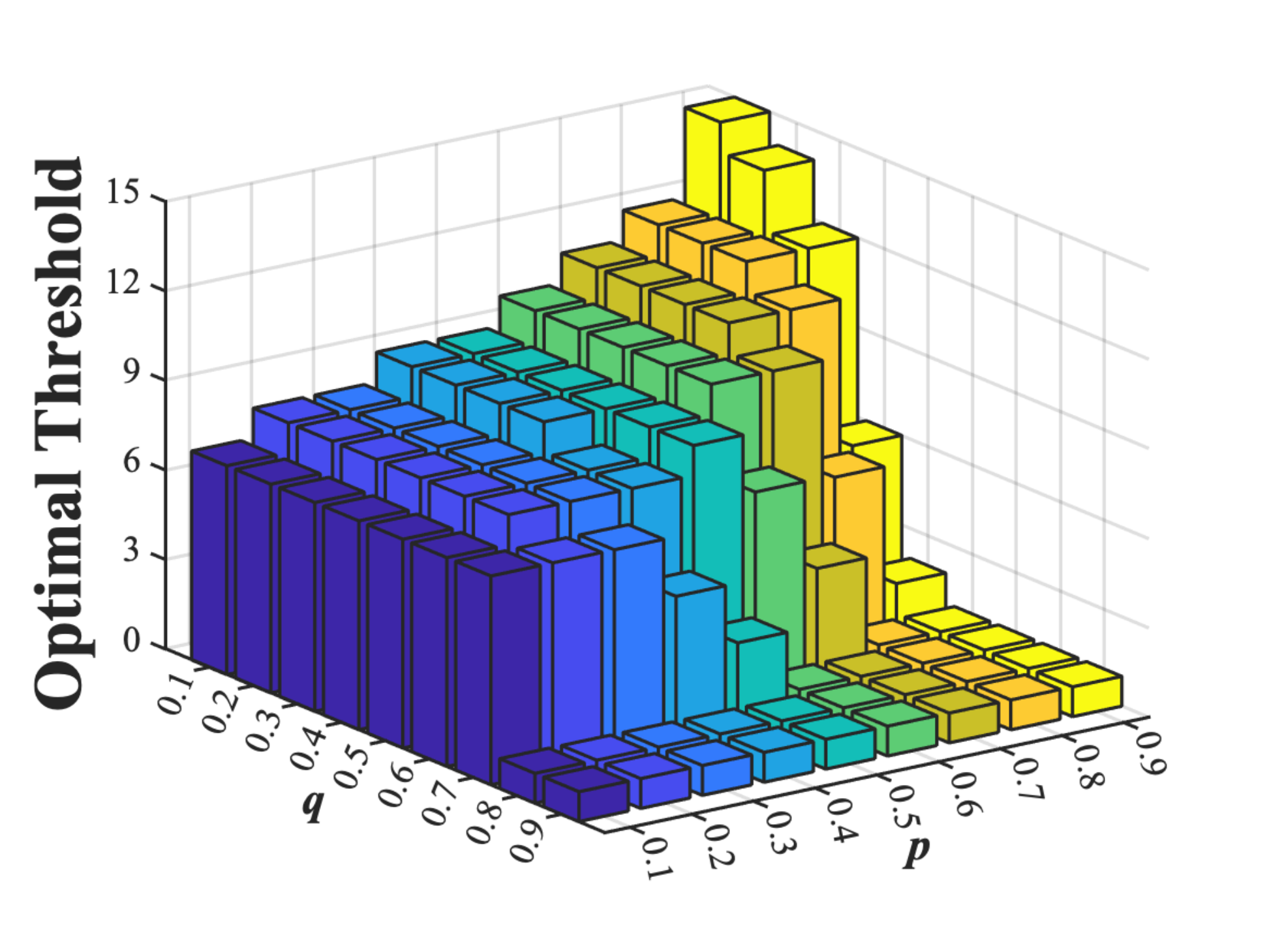}
 \label{fig:thre_beta25}
 }\  
 \subfloat[$\beta=45$]{
 \includegraphics[scale=.28]{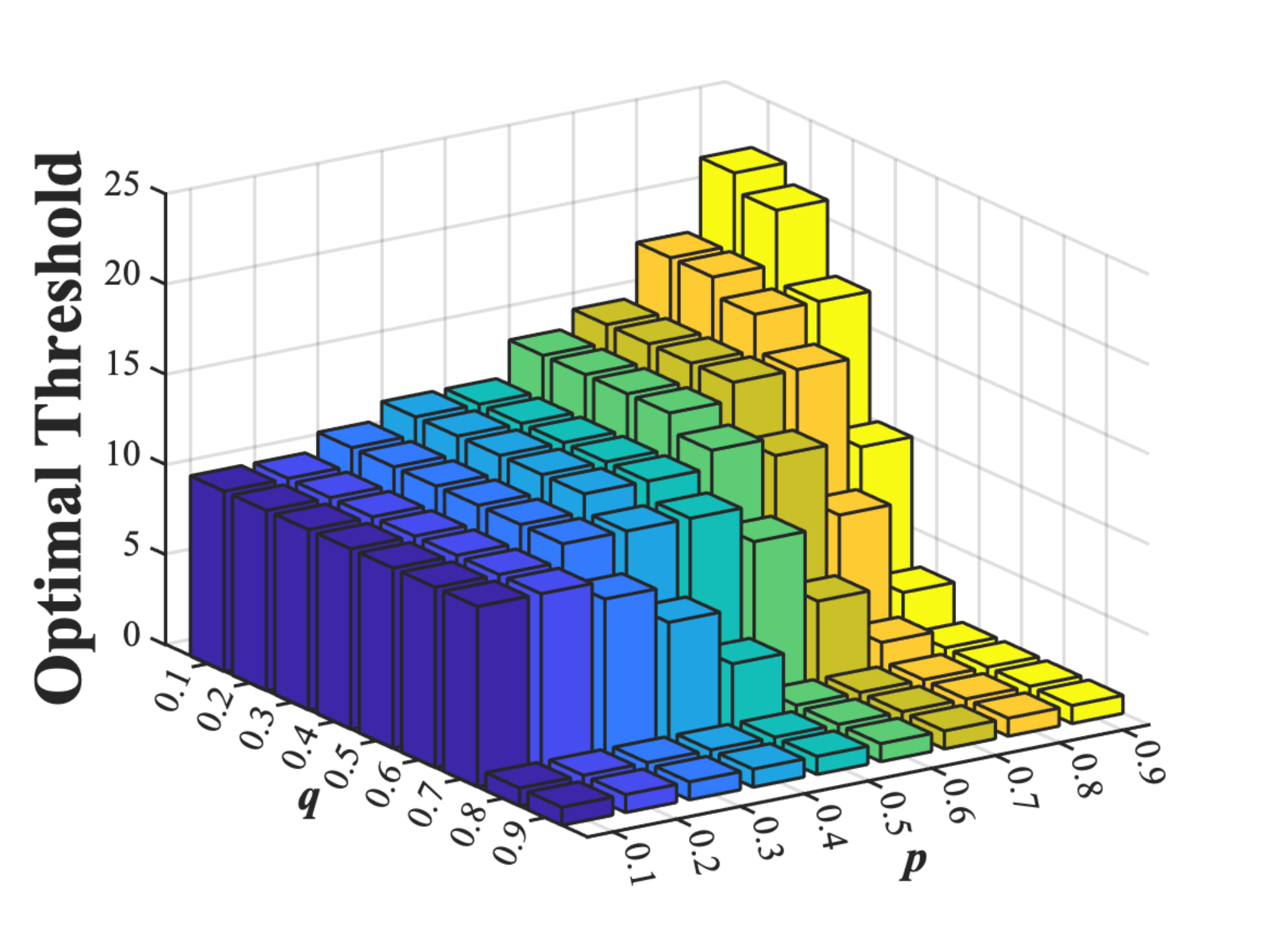}
 \label{fig:thre_beta45}
 }
 \caption{The optimal threshold $\Delta_\beta^\star$ vs transmission error probabilities with different values of $\beta$}
 \label{fig:OptThre}
\end{figure*}
\fi 

\vspace{-0.2cm}
\subsection{Comparison with greedy policy}
Let $e_t$ denote the total energy consumption before the time slot $t$. Then, $\bar{e}_t=e_t/(t-1)$ denotes the average energy cost consumed before $t$. In this part, we compare the Algorithm \ref{alg1} with a greedy policy which transmits 
whenever transmission is allowed (i.e., $\Lambda\geq D(\Delta)$) and if the empirical energy cost is less than the energy budget (i.e., $\bar{e}_t<E_\text{max}$). 
The setting is same as in 2) of \ref{sec:simu1}. 
\ifreport
\begin{figure}
    \centering
    \includegraphics[width=0.65\textwidth]{Age_dist/comp_greedy.eps}
    \caption{Average age vs energy constraint $E_{\text{max}}$ using the proposed policy (Algorithm \ref{alg1}) and greedy policy}
    \vspace{-0.7cm}
    \label{fig:comp_greedy}
\end{figure}
\else
\begin{figure}
    \centering    \includegraphics[width=0.41\textwidth]{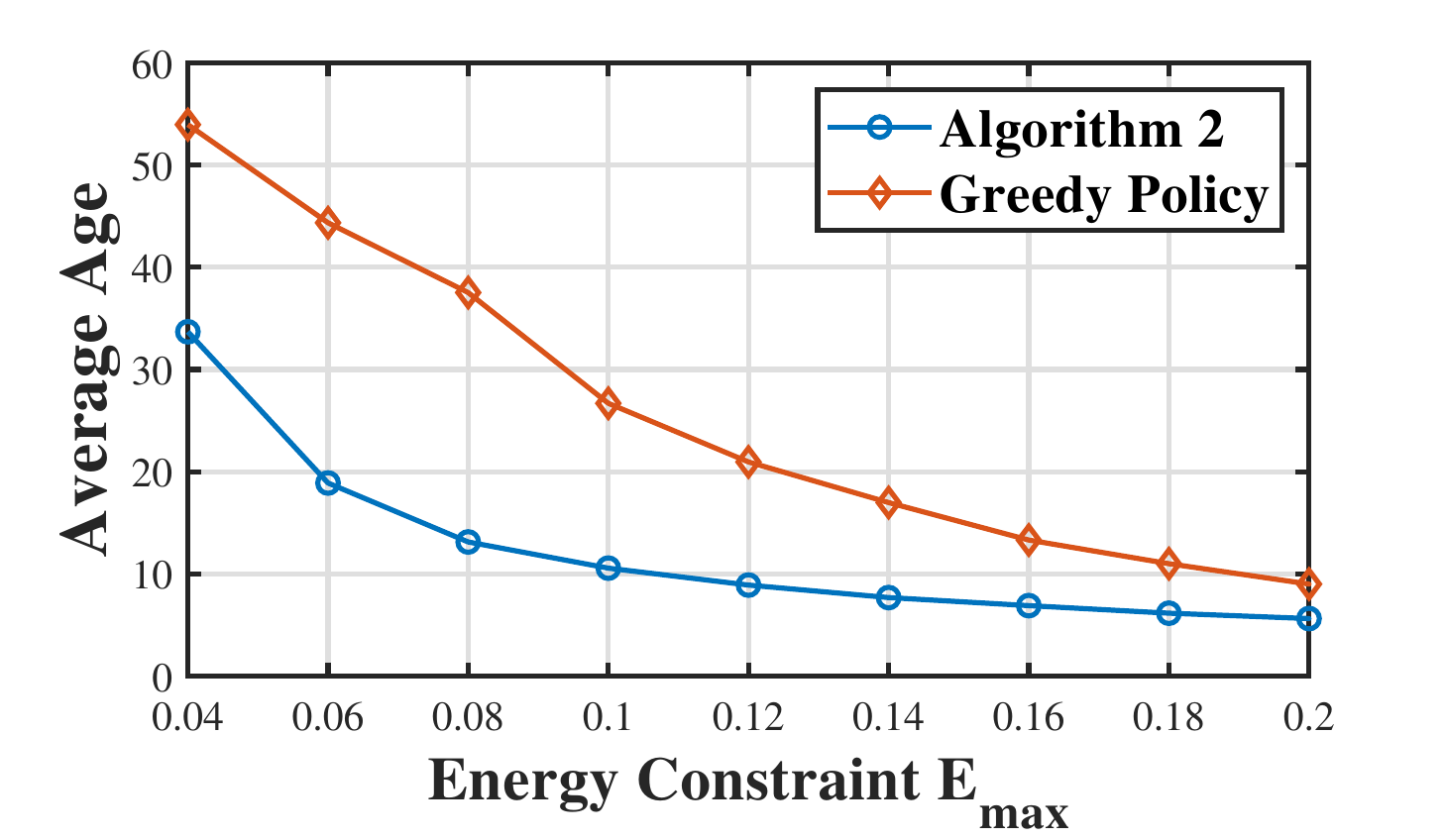}
    \caption{Average age vs energy constraint $E_{\text{max}}$ using the proposed policy (Algorithm \ref{alg1}) and greedy policy}
    \label{fig:comp_greedy}
         \vspace{-0.7cm}
\end{figure}
\fi 
When $E_{\text{max}}=1$, zero-waiting policy is obviously optimal and thus is considered uninteresting. In practical application like remote health, the transmission is likely to be only a small fraction of the total time duration because of the excessive energy consumption. This makes tight energy constraint more interesting to study. 
In Fig. \ref{fig:comp_greedy}, we 
observe that when $0.04\leq E_{\text{max}}\leq 0.2$, the improvement of our policy is quite significant, $30\%$ to $70\%$ reduction of cost. 


\section{Conclusion}
\ifreport
In this chapter, 
\else
In this paper, 
\fi 
we investigate an age minimization problem with constraints on the long-term average energy consumption and distortion of each update. The problem is formulated as a constrained MDP. Through the Lagrangian multiplier technique, we connect the problem to an average cost problem \eqref{avg_cost}, and show that the optimal policy is a mixture of two stationary deterministic policies (Theorem \ref{Ori_Lag}), each of which is optimal for the average cost problem and of a threshold-type (Theorem \ref{opt_stru}). Then, we obtain the average cost under the threshold-type policy, which is a piecewise function of threshold (Theorem \ref{cost_cal}), and we find the optimal threshold value in the last interval (Theorem \ref{new}). With these, we avoid dealing with infinite state space when using a classical solution RVI, and develop low-complexity algorithms. In the special, but practically very important case of constant distortion requirements, we obtain a closed-form solution (Corollary \ref{cor: special_case}). We show that the optimal threshold increases with the probability that distortion requirement is met, and the impact of transmission error probability $p$ on the optimal threshold depends on whether it is age or energy being the dominant issue (Theorem \ref{lem:special_case}).






%


\bibliographystyle{unsrt}
\bibliography{Reference}

\begin{thebibliography}{10}

\bibitem{kaul2011minimizing}
Sanjit Kaul, Marco Gruteser, Vinuth Rai, and John Kenney.
\newblock Minimizing age of information in vehicular networks.
\newblock In {\em 2011 8th Annual IEEE Communications Society Conference on
  Sensor, Mesh and Ad Hoc Communications and Networks}, pages 350--358. IEEE,
  2011.

\bibitem{wang2013intelligent}
Xiaogang Wang.
\newblock Intelligent multi-camera video surveillance: A review.
\newblock {\em Pattern recognition letters}, 34(1):3--19, 2013.

\bibitem{naresh2020internet}
Vankamamidi~Srinivasa Naresh, Suryateja~S Pericherla, Pilla Sita~Rama Murty,
  and Reddi Sivaranjani.
\newblock Internet of things in healthcare: Architecture, applications,
  challenges, and solutions.
\newblock {\em Comput. Syst. Sci. Eng.}, 35(6):411--421, 2020.

\bibitem{kalor2019minimizing}
Anders~E Kal{\o}r and Petar Popovski.
\newblock Minimizing the age of information from sensors with common
  observations.
\newblock {\em IEEE Wireless Communications Letters}, 8(5):1390--1393, 2019.

\bibitem{zhou2020age}
Bo~Zhou and Walid Saad.
\newblock On the age of information in internet of things systems with
  correlated devices.
\newblock In {\em GLOBECOM 2020-2020 IEEE Global Communications Conference},
  pages 1--6. IEEE, 2020.

\bibitem{shao2021partially}
Yulin Shao, Qi~Cao, Soung~Chang Liew, and He~Chen.
\newblock Partially observable minimum-age scheduling: The greedy policy.
\newblock {\em IEEE Transactions on Communications}, 2021.

\bibitem{9155238}
Cho-Hsin Tsai and Chih-Chun Wang.
\newblock Unifying aoi minimization and remote estimation — optimal
  sensor/controller coordination with random two-way delay.
\newblock In {\em IEEE INFOCOM 2020 - IEEE Conference on Computer
  Communications}, pages 466--475, 2020.

\bibitem{tsai2021unifying}
Cho-Hsin Tsai and Chih-Chun Wang.
\newblock Unifying aoi minimization and remote estimation—optimal
  sensor/controller coordination with random two-way delay.
\newblock {\em IEEE/ACM Transactions on Networking}, 30(1):229--242, 2021.

\bibitem{gupta2005cluster}
Indranil Gupta, Denis Riordan, and Srinivas Sampalli.
\newblock Cluster-head election using fuzzy logic for wireless sensor networks.
\newblock In {\em 3rd Annual Communication Networks and Services Research
  Conference (CNSR'05)}, pages 255--260. IEEE, 2005.

\bibitem{abdelmoneem2019cloud}
Randa~M Abdelmoneem, Abderrahim Benslimane, Eman Shaaban, Sherin Abdelhamid,
  and Salma Ghoneim.
\newblock A cloud-fog based architecture for iot applications dedicated to
  healthcare.
\newblock In {\em ICC 2019-2019 IEEE International Conference on Communications
  (ICC)}, pages 1--6. IEEE, 2019.

\bibitem{kong2016design}
Xiaomin Kong, Binwen Fan, Wei Nie, and Yi~Ding.
\newblock Design on mobile health service system based on android platform.
\newblock In {\em 2016 IEEE Advanced Information Management, Communicates,
  Electronic and Automation Control Conference (IMCEC)}, pages 1683--1687.
  IEEE, 2016.

\bibitem{ayaz2019internet}
Muhammad Ayaz, Mohammad Ammad-Uddin, Zubair Sharif, Ali Mansour, and El-Hadi~M
  Aggoune.
\newblock Internet-of-things (iot)-based smart agriculture: Toward making the
  fields talk.
\newblock {\em IEEE access}, 7:129551--129583, 2019.

\bibitem{gu2019timely}
Yifan Gu, He~Chen, Yong Zhou, Yonghui Li, and Branka Vucetic.
\newblock Timely status update in internet of things monitoring systems: An
  age-energy tradeoff.
\newblock {\em IEEE Internet of Things Journal}, 6(3):5324--5335, 2019.

\bibitem{huang2021age}
Haitao Huang, Deli Qiao, and M~Cenk Gursoy.
\newblock Age-energy tradeoff optimization for packet delivery in fading
  channels.
\newblock {\em IEEE Transactions on Wireless Communications}, 21(1):179--190,
  2021.

\bibitem{9736576}
Guidan Yao, Ahmed Bedewy, and Ness~B. Shroff.
\newblock Age-optimal low-power status update over time-correlated fading
  channel.
\newblock {\em IEEE Transactions on Mobile Computing}, pages 1--1, 2022.

\bibitem{yao2021age}
Guidan Yao, Ahmed~M Bedewy, and Ness~B Shroff.
\newblock Age-optimal low-power status update over time-correlated fading
  channel.
\newblock In {\em 2021 IEEE International Symposium on Information Theory
  (ISIT)}, pages 2972--2977. IEEE, 2021.

\bibitem{feng2021age}
Songtao Feng and Jing Yang.
\newblock Age of information minimization for an energy harvesting source with
  updating erasures: Without and with feedback.
\newblock {\em IEEE Transactions on Communications}, 2021.

\bibitem{zheng2020age}
Haina Zheng, Ke~Xiong, Pingyi Fan, Zhangdui Zhong, and Khaled~Ben Letaief.
\newblock Age-energy region in wireless powered communication networks.
\newblock In {\em IEEE INFOCOM 2020-IEEE Conference on Computer Communications
  Workshops (INFOCOM WKSHPS)}, pages 334--339. IEEE, 2020.

\bibitem{rajaraman2021not}
Nived Rajaraman, Rahul Vaze, and Goonwanth Reddy.
\newblock Not just age but age and quality of information.
\newblock {\em IEEE Journal on Selected Areas in Communications},
  39(5):1325--1338, 2021.

\bibitem{dong2020energy}
Yunquan Dong, Pingyi Fan, and Khaled~Ben Letaief.
\newblock Energy harvesting powered sensing in iot: Timeliness versus
  distortion.
\newblock {\em IEEE Internet of Things Journal}, 7(11):10897--10911, 2020.

\bibitem{bastopcu2019age}
Melih Bastopcu and Sennur Ulukus.
\newblock Age of information for updates with distortion.
\newblock In {\em 2019 IEEE Information Theory Workshop (ITW)}, pages 1--5.
  IEEE, 2019.

\bibitem{bastopcu2021age}
Melih Bastopcu and Sennur Ulukus.
\newblock Age of information for updates with distortion: Constant and
  age-dependent distortion constraints.
\newblock {\em IEEE/ACM Transactions on Networking}, 29(6):2425--2438, 2021.

\bibitem{hu2020balancing}
Shaoling Hu and Wei Chen.
\newblock Balancing data freshness and distortion in real-time status updating
  with lossy compression.
\newblock In {\em IEEE INFOCOM 2020-IEEE Conference on Computer Communications
  Workshops (INFOCOM WKSHPS)}, pages 13--18. IEEE, 2020.

\bibitem{tsai2021jointly}
Cho-Hsin Tsai and Chih-Chun Wang.
\newblock Jointly minimizing aoi penalty and network cost among coexisting
  source-destination pairs.
\newblock In {\em 2021 IEEE International Symposium on Information Theory
  (ISIT)}, pages 3255--3260. IEEE, 2021.

\bibitem{sennott1989average}
Linn~I Sennott.
\newblock Average cost optimal stationary policies in infinite state markov
  decision processes with unbounded costs.
\newblock {\em Operations Research}, 37(4):626--633, 1989.

\bibitem{ceran2019average}
Elif~Tu{\u{g}}{\c{c}}e Ceran, Deniz G{\"u}nd{\"u}z, and Andr{\'a}s Gy{\"o}rgy.
\newblock Average age of information with hybrid arq under a resource
  constraint.
\newblock {\em IEEE Transactions on Wireless Communications}, 18(3):1900--1913,
  2019.

\bibitem{sennott1993constrained}
Linn~I Sennott.
\newblock Constrained average cost markov decision chains.
\newblock {\em Probability in the Engineering and Informational Sciences},
  7(1):69--83, 1993.

\end{thebibliography}
\appendices
\section{Proof of Theorem \ref{Ori_Lag}}
\label{proof_ori_lag}
First, we provide some definitions of terms which will be used in our proof. The definitions comply with \cite{sennott1993constrained}: Let $G\subset \mathcal{S}$ be nonempty set of states. Give $\mathbf{s}\in\mathcal{S}$, $\mathcal{R}(\mathbf{s},G)$ is defined as a class of policies such that $\mathbb{P}^\pi(\mathbf{s}_t\in G\, \text{for some}\, t\geq 1: \mathbf{s}_0=\mathbf{s})=1$ and the expected time $\tau_{\mathbf{s},G}(\pi)$ of the first passage from $\mathbf{s}$ to $G$ using $\pi$ is finite. Further, $\mathcal{R}^\star(\mathbf{s},G)\subset\mathcal{R}(\mathbf{s},G)$ are policies that have finite expected average AoI and finite expected energy of a first passage from $\mathbf{s}$ to $G$.
By \cite{sennott1993constrained}, it suffices to show the following conditions hold.
\begin{itemize}
    \item A1: For all $r>0$, the set $B(r)=\{\mathbf{s}: \exists\, u\,\, \text{s.t.} \,\,C_{\Delta}(\mathbf{s},u)+C_{E}(\mathbf{s},u)\leq r \}$ is finite.
    \item A2: There exists a stationary deterministic policy $g_1\in\Pi$ which induces a Markov chain with properties: the state space incurred by $g_1$ consists of a single (non-empty) positive recurrent class $\mathcal{X}$ and a set $\mathcal{Y}$ of transient states such that $g_1\in\mathcal{R}^\star(\mathbf{s}',\mathcal{X})$, for $\mathbf{s}'\in \mathcal{Y}$, Moreover, both the average age and energy costs on $\mathcal{X}$ are finite.
    \item A3: Given any two states $\mathbf{s}_1\neq\mathbf{s}_2$, there exists a policy $g_2$ such that $g_2\in\mathcal{R}^\star(\mathbf{s}_1,\{\mathbf{s}_2\})$
   \item A4: If a stationary deterministic policy $g_3$ has at least one positive recurrent state then it has a single positive recurrent class $\mathcal{X}$. Moreover, if initial state $\mathbf{s}_0\notin \mathcal{X}$, then $g_3\in\mathcal{R}^\star(\mathbf{s}_0,\mathcal{X})$.
 \item A5: There exists a policy $g_4$ such that $\bar{A}(g_4)<\infty$ and $\bar{E}(g_4)< E_{\text{max}}$.
\end{itemize}
For A1, given $r$, for any $\mathbf{s}'=(\Delta',\Lambda')\in B(r),\Delta'\leq \min_u\{C_{\Delta}(\mathbf{s}',u)+C_{E}(\mathbf{s}',u)\}\leq r$ by definition of $B(r)$. This means given $r$, the age of any state in $B(r)$ is upper bounded by $r$. Together with $\Delta\in\mathbb{N}^+$ and $\Lambda\in\{0,1,\cdots,M\}$, $B(r)$ is finite.


For A2, consider policy $g_1$ that takes action $u=1$ if $\Lambda=M$; otherwise, $u=0$. The set $\mathcal{X}=\{(\Delta,\Lambda):\Delta\in\mathbb{N}^+,\Lambda'=0,1,...,M\}$ is recurrent. This is because that the next state after $u=1$ is $(o,\Lambda')\in \mathcal{X}$ due to $0<p<1$, and the next state after $u=0$ is $(o+1,\Lambda')\in \mathcal{X}$, where $\Lambda'\in\{0,1,...,M\}$, $o\in\mathbb{N}^+$. Thus, $\mathcal{Y}=\emptyset$ and $g_1\in\mathcal{R}^\star(\mathbf{s}',\mathcal{X})$, given $\mathbf{s}'\in \mathcal{Y}$. Besides, $\bar{A}(g_1)=1/p_s$, $\bar{E}(g_1)\leq 1$ are finite, where $p_s=P_\Lambda(M)(1-p)$. Hence, A2 holds.

For A3, given $\mathbf{s}_1=(\Delta_1,\Lambda_1)$ and $\mathbf{s}_2=(\Delta_2,\Lambda_2)$, consider the policy $g_2$ that uses $g_1$ in the proof of A2 till entry to state $(1,\cdot)$, and then takes $u=0$ till $\Delta=\Delta_2$, after which policy repeats previous two stages. Based on analysis in proof of A2, it takes finite time to enter state $(1,\cdot)$ from $\mathbf{s}_1$. In addition, it takes $\Delta_2-1$ slots to reach the age $\Delta_2$. Thus, the two stages take finite time. Moreover, the probability that $\mathbf{s}\neq\mathbf{s}_2$ at the end of the two stage exponentially decreases with the number of the two stage being conducted. Hence, A3 holds.

For A4, the only way for $g_3$ to generate at least one recurrent class is that successful transmission occurs repeatedly under $g_3$. Note after successful transmission, state becomes $(1,\Lambda)$, $\Lambda\in\{0,1,\cdots,M\}$. Thus, any recurrent class must include $(1,\Lambda)$. Hence, there is only one recurrent class. Moreover, since $g_3$ will take $u=1$ repeatedly, for any initial state $\mathbf{s}_0\notin \mathcal{X}$, it takes finite time from $\mathbf{s}_0$ to $(1,\Lambda)$. Hence, A4 holds.

For A5, consider policy $g_4$ that takes action $u=1$ if $2\lceil 1/E_\text{max}\rceil$ divides $\Delta$ and $\Lambda=M$; otherwise, $u=0$. We have $\bar{E}(g_4)\leq 1/2\lceil 1/E_\text{max}\rceil \leq E_\text{max}/2<E_\text{max}$ and $\bar{A}(g_4)=\lceil 1/E_\text{max}\rceil+2\lceil 1/E_\text{max}\rceil^2(2/p_s-1)<\infty$, where $p_s$ is defined in proof of A2. This completes our proof.

\section{Proof of Theorem \ref{cost_cal}}
\label{opt_thres0}

The state transition diagram under the policy in the form of \eqref{optPol} is given in Fig. \ref{fig:trans_diag}, where $k$ is a threshold.

\begin{figure*}
    \centering
    \includegraphics[width=0.95\textwidth]{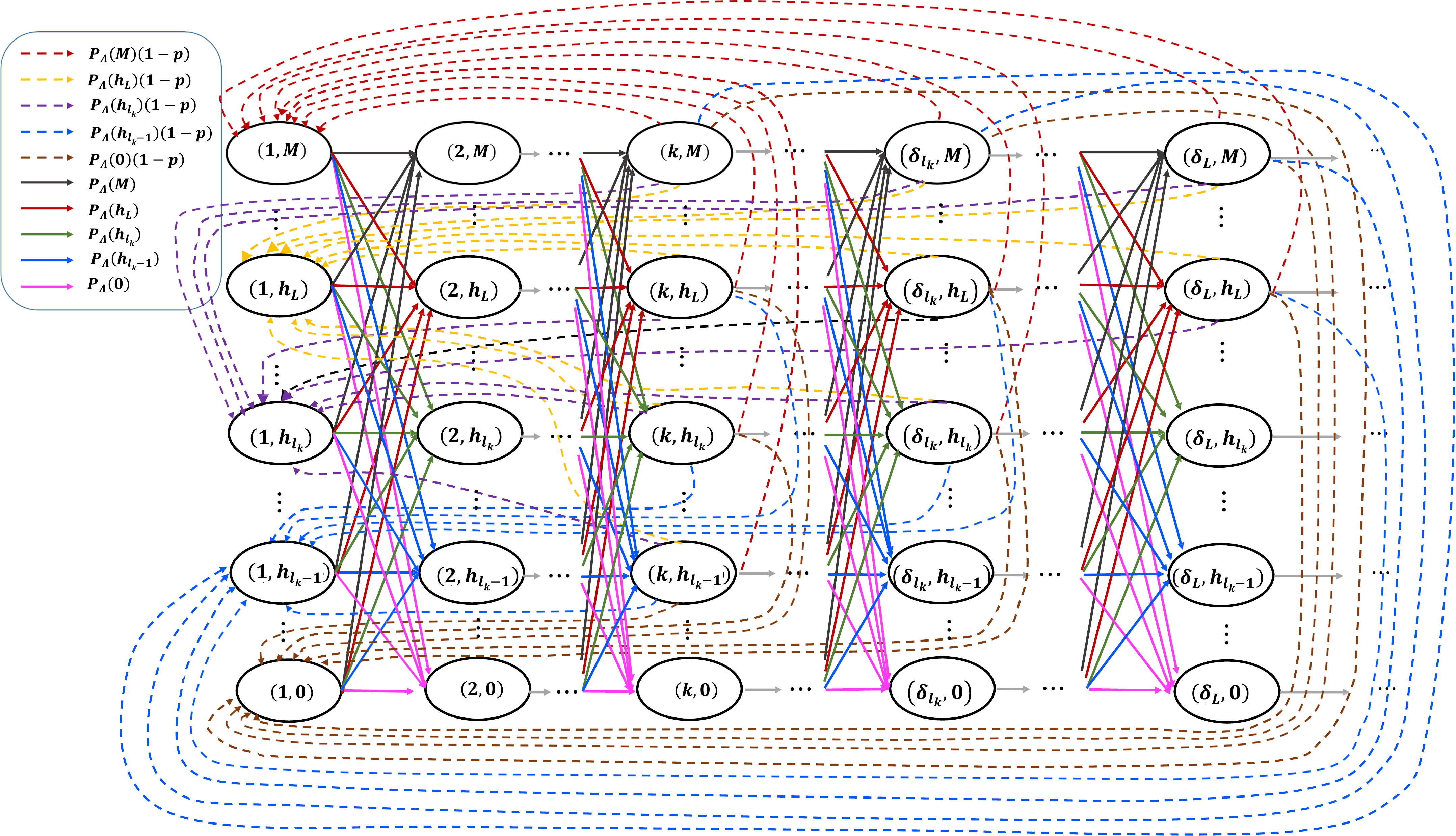}
    \caption{State Transition Diagram. Different lines correspond to different state transition probabilities as given in the legend.}
    \label{fig:trans_diag}
\end{figure*}

Define the steady state probability as follows:
\begin{align}
    x_{\Delta,\Lambda}\triangleq \mathbb{P}(\mathbf{s}=(\Delta,\Lambda))
\end{align}
Based on the state transition diagram, balance equations can be obtained as follows:
\begin{align}
 \! \! & x_{\Delta+1,\Lambda}\!=\!\sum_{b=0}^M x_{\Delta,b} P_\Lambda(\Lambda), & \forall 1\!\leq\! \Delta\!<\! k\\
 \! \! & x_{\Delta+1,\Lambda}\!=\!(\sum_{b=0}^M x_{\Delta,b}\!-\!(1\!-\! p)\!\!\!\sum_{b=h_{{l_k}-1}}^M \!\!x_{\Delta,b})P_\Lambda(\Lambda), & \forall k\!\leq \!\Delta\!<\!\delta_{l_k}\\
 \! \! &  \cdots \notag\\
  \! \! & x_{\Delta+1,\Lambda}\!=\!(\sum_{b=0}^M x_{\Delta,b}\!-\!(1\!-\!p)\!\!\sum_{b=h_L}^M \!\!x_{\Delta,b})P_\Lambda(\Lambda), & \forall  \delta_L\!\leq \!\Delta\\
\! \!  &\sum_{\Delta=1}^\infty \sum_{\Lambda=0}^M x_{\Delta,\Lambda}=1 &
\end{align}
Let $z_\Delta=\sum_{\Lambda=0}^M x_{\Delta,\Lambda}$. Then, the balance equations can be transformed as
\begin{align}
   & z_{\Delta+1}=z_\Delta & \forall 1\leq \Delta< k\label{be1}\\
   & z_{\Delta+1}=z_\Delta B_{{l_k}-1} & \forall k\leq \Delta<\delta_{l_k}\label{be2}\\
   &\ \ \ \vdots\notag\\
   & z_{\Delta+1}=z_\Delta B_{L} & \forall \delta_L\leq \Delta\\
   & \sum_{\Delta=1}^\infty z_\Delta=1 &\label{be3}
\end{align}
where $B_{r}=1-(1-p)\sum_{\Lambda=h_{r}}^M P_\Lambda(\Lambda)$, for ${l_k}-1\leq r\leq L$. 

Solving the equations \eqref{be1}-\eqref{be3}, we obtain 
\begin{align}
    z_1=&\Big[k+\frac{B_{{l_k}-1}}{1-B_{{l_k}-1}}
    +\mathbbm{1}_{\{{l_k}\leq L\}}B_{{l_k}-1}^{\delta_{l_k}-k}\Big(-\frac{1}{1-B_{{l_k}-1}}\notag\\
    &+\sum_{j={l_k}}^{L}\frac{1-\mathbbm{1}_{\{j<L\}}B_{j}^{\delta_{j+1}-\delta_j}}{1-B_j}w({l_k},j)\Big)\Big]^{-1},
\end{align}
where $\mathbbm{1}_{\{\cdot\}}$ is indicator function and $w(i,j)$ is defined as
\begin{align}
   w(i,j)=\mathbbm{1}_{\{i<j\}}\prod_{v=i}^{j-1}B_{v}^{\delta_{v+1}-\delta_v}+\mathbbm{1}_{\{i\geq j\}},
\end{align}
and for $\Delta>1$, $z_\Delta$ is given by
\begin{align}
    & z_\Delta=z_1,   & \forall 1<\Delta<k\\
    & z_\Delta=z_1 B_{{l_k}-1}^{\Delta-k}, &\forall k\leq \Delta<\delta_{j+1} \\
    & z_\Delta=w({l_k},j)B_{{l_k}-1}^{\delta_{l_k}-k}B_j^{\Delta-\delta_j}, &\forall \delta_j\leq \Delta<\delta_{j+1} \\
    & z_\Delta=w({l_k},L)B_{{l_k}-1}^{\delta_{l_k}-k}B_L^{\Delta-\delta_L}, &\forall \delta_L\leq \Delta
\end{align}

With this, the resultant average Lagrangian cost using the policy $\pi_k$ in the form of \eqref{optPol} with threshold $k\in\mathbb{N}^+$ is 
\begin{align}
    &\bar{L}(\pi_k;\beta)\notag\\
    =&\sum_{\Delta=1}^{k-1}\Delta z_\Delta+\sum_{\Delta=k}^\infty \Big(\Delta+\beta \sum_{b= D(\Delta)}^{M}P_\Lambda(b)\Big)z_\Delta\\
    =&\sum_{\Delta=1}^{k-1}\Delta z_\Delta+\mathbbm{1}_{\{{l_k}=L+1\}}\cdot\sum_{\Delta=k}^\infty \Big(\Delta+\beta F(h_{{l_k}-1})\Big)z_k\notag\\
    &+\mathbbm{1}_{\{{l_k}\leq L\}}\Big(\mathbbm{1}_{\{{l_k}<L\}}\cdot\sum_{r={l_k}}^{L-1}\sum_{\Delta=\delta_r}^{\delta_{r+1}-1} (\Delta+\beta F(h_r))z_{\delta_r}\notag\\
    &+\sum_{\Delta=k}^{\delta_{l_k}-1} (\Delta+\beta F( h_{{l_k}-1}))z_k+\sum_{\Delta=\delta_L}^{\infty} (\Delta+\beta F(h_L))z_{\delta_L}\Big)\\
    =&z_1\Bigg\{\frac{k(k-1)}{2}+\frac{1}{(1-B_{{l_k}-1})^2}+\frac{k-1+\beta F(h_{{l_k}-1})}{1-B_{{l_k}-1}}\notag\\
    &+\mathbbm{1}_{\{{l_k}\leq L\}}B_{{l_k}-1}^{\delta_{l_k}-k}\bigg[-\frac{1}{(1-B_{{l_k}-1})^2}-\frac{\delta_{l_k}-1+\beta F(h_{{l_k}-1})}{1-B_{{l_k}-1}}\notag\\
    &+\sum_{j={l_k}}^{L}w({l_k},j)\Big(\frac{1-\mathbbm{1}_{\{j<L\}}B_{j}^{\delta_{j+1}-\delta_j}}{(1-B_j)^2}\notag\\
    &+\frac{-\mathbbm{1}_{\{j<L\}}B_{j}^{\delta_{j+1}-\delta_j}(\delta_{j+1}-1+\beta F(h_j))}{1-B_j}\notag\\
    &+\frac{\delta_j-1+\beta F(h_j)}{1-B_j}\Big)\bigg]\Bigg\}
\end{align}
where $F(h)\triangleq \sum_{b=h}^{M}P_\Lambda(b)$

And the average energy is
\begin{align}
    &\bar{E}(\pi_k)\notag\\
    =&\sum_{\Delta=k}^\infty \beta \sum_{b= D(\Delta)}^{M}P_\Lambda(b)z_\Delta\\
    =&z_1\Bigg\{\frac{\beta F(h_{{l_k}-1})}{1-B_{{l_k}-1}}+\mathbbm{1}_{\{{l_k}\leq L\}}B_{{l_k}-1}^{\delta_{l_k}-k}\bigg(-\frac{\beta F(h_{{l_k}-1})}{1-B_{{l_k}-1}}\notag\\
    &+\sum_{j={l_k}}^{L}w({l_k},j)\beta F(h_j)\frac{1-\mathbbm{1}_{\{j<L\}}B_{j}^{\delta_{j+1}-\delta_j}}{1-B_j}\bigg)\Bigg\}
\end{align}

\section{Proof of Theorem \ref{new}}
\label{proof_min_ub}
We use $\mathcal{F}(k)$ to denote the resultant cost expression $\bar{L}(\pi_k;\beta)$ for $k\geq \delta_L$. Then, $\mathcal{F}(k)$ is
\begin{align}
   &\mathcal{F}(k)=\notag\\
    &\frac{1-B_L}{B_L+k(1-B_L)}\left(\frac{k(k-1)}{2}+\frac{\beta F(h_{L})+k}{1-B_L}+\frac{B_L}{(1-B_L)^2}\right),
\end{align}

Next, we will find the minimum of $\mathcal{F}(k)$ for $k\geq \delta_L$.
To this end, we first show that $\mathcal{F}(k)$ firstly decreases and then increases with $k\in\mathbb{N}$. Then, we get optimal threshold on $[\delta_L,\infty)$ by comparing the $y$ that optimizes $\mathcal{F}(k)$ in the range $k\in\mathbb{N}$ with $\delta_L$.
Actually, after some basic calculation, we have
\begin{align}
  \mathcal{F}(k+1)-\mathcal{F}(k) = \frac{0.5k^2+(0.5+\frac{B_L}{1-B_L})k-\frac{\beta F(h_L)}{1-B_L}}{(k+1+\frac{B_L}{1-B_L})(k+\frac{B_L}{1-B_L})}\label{costfun}
\end{align}

Note that $1-B_L=(1-p)\sum_{b=h_L}^M P_\Lambda(b)>0$. Thus, the denominator is positive. Let $\mathcal{H}(k)=0.5k^2+(0.5+\frac{B_L}{1-B_L})k-\frac{\beta F(h_L)}{1-B_L}$. Since
$\mathcal{\delta}(k+1)-\mathcal{H}(k)=k+1+\frac{B_L}{1-B_L}>0$ for $k\geq 0$ and $\mathcal{H}(0)<0$, there exists $y\in\mathbb{N}$ such that $\mathcal{H}(k)<0$ for $k< y$ and $\mathcal{H}(k)\geq 0$ for $k\geq y$. Therefore, \eqref{costfun} is negative when $k<y$ and then becomes positive when $k\geq y$. This implies that $\mathcal{F}(k)$ firstly decreases and then increases with $k\in\mathbb{N}$, and $\mathcal{F}(y)$ is the minimum for $k\in \mathbb{N}$. Note that $y=\min\{k\in\mathbb{N}: \mathcal{H}(k)\geq 0\}$. 
Let $\sigma$ denote the solution to $\mathcal{H}(\sigma)=0$. Then, we have 
\begin{align}
    \sigma=-\frac{1+B_L}{2(1-B_L)}+\left(\frac{B_L^2}{(1-B_L)^2}+\frac{B_L+2\beta F(h_L)}{1-B_L}+\frac{1}{4}\right)^{0.5},
\end{align}
and $y=\lceil \sigma\rceil$.

If $y\leq \delta_L$, $\mathcal{F}$ increases with $k$ on the domain $[\delta_L,\infty)$ and the optimal threshold $k_{UB}$ in the domain is $\delta_L$; if $y> \delta_L$, $\mathcal{F}$ first decreases and increases with $k$ on the domain $[\delta_L,\infty)$, and thus the optimal threshold $k_{UB}$ is $y$. Hence, the optimal threshold on the domain $[\delta_L,\infty)$, which is denoted by $k_{UB}$, is
\begin{equation}
    k_{UB} = \max \{\delta_L,y\}
\end{equation}

\section{Proof of Theorem \ref{lem:special_case}}
\label{app: special_case}
To explore how the optimal threshold varies with $W$, $p$, respectively. We regard the optimal threshold as a function of $W$ and $p$, i.e.,
\begin{align}
    \mathcal{G}(W,p)\triangleq &-\frac{1+R}{2(1\!-\!R)}\!+\!\left(\!\frac{R^2}{(1\!-\!R)^2}\!+\!\frac{R\!+\!2\beta W}{1\!-\!R}\!+\!\frac{1}{4}\right)^{0.5},\\
    = & \frac{-2+(1-p)W}{2(1-p)W}+\!\Bigg(\!\frac{(1-(1-p)W)^2}{(1-p)^2W^2}\!\notag\\
    &+\!\frac{1-(1-p)W\!+\!2\beta W}{(1-p)W}\!+\!\frac{1}{4}\Bigg)^{0.5},
\end{align}
and then study partial derivatives $\frac{\partial \mathcal{G}(W,p)}{\partial W}$ and $\frac{\partial \mathcal{G}(W,p)}{\partial p}$.

(i) We have $\frac{\partial \mathcal{G}(W,p)}{\partial p}$ as
\begin{align}
    &\frac{\partial \mathcal{G}(W,p)}{\partial p}=\notag\\
    &\frac{1}{(1-p)^2W}\Bigg(-1+\frac{2-(1-p)W+2\beta(1-p)W^2}{\sqrt{(2-(1-p)W)^2+8\beta(1-p)W^2}}\Bigg)
    \label{eqn:der_p}
\end{align}
Since
\begin{align}
    &\big(2-(1-p)W\big)^2+8\beta(1-p)W^2\notag\\
    &-\big(2-(1-p)W+2\beta(1-p)W^2\big)^2\notag\\
    =& 4\beta(1-p)^2W^3(1-\beta W)
\end{align}
If $W >\frac{1}{\beta}$, then \eqref{eqn:der_p}$>0$. In the case, the $\mathcal{G}(W,p)$ increases with $p$. If $ W <\frac{1}{\beta}$, then \eqref{eqn:der_p}$<0$. In the case, the $\mathcal{G}(W,p)$ decreases with $p$. 

(ii) It suffices to show that the maximum of
$$\Bigg\lceil\!\!-\!\frac{1+R}{2(1\!-\!R)}\!+\!\left(\!\frac{R^2}{(1\!-\!R)^2}\!+\!\frac{R\!+\!2\beta W}{1\!-\!R}\!+\!\frac{1}{4}\right)^{0.5}\!\!\Bigg \rceil$$ in \eqref{opt_threval_spec} is not larger than 1. In fact, when $\beta <\frac{1}{W}$ and $0<R<1$, we have 
\begin{align}
    \frac{R^2}{(1\!-\!R)^2}\!+\!\frac{R\!+\!2\beta W}{1\!-\!R}\!+\!\frac{1}{4}&<\frac{R^2}{(1\!-\!R)^2}\!+\!\frac{R\!+2}{1\!-\!R}\!+\!\frac{1}{4}\\
    &\leq\left(1+\!\frac{1+R}{2(1\!-\!R)}\right)^2.
\end{align}
This completes the proof of (ii).

For (iii), we show that $\frac{\partial \mathcal{G}(W,p)}{\partial W}\geq 0$.
(iii) It suffices to show that $\frac{\partial \mathcal{G}(W,p)}{\partial W}\geq 0$.
In fact, we have
\begin{align}
    &\frac{\partial \mathcal{G}(W,p)}{\partial W}=\notag\\
    &\frac{1}{(1-p)W^2}\Bigg(1+\frac{-2+(1-p)W}{\sqrt{(2-(1-p)W)^2+8\beta(1-p)W^2}}\Bigg)
\end{align}
Since $0\!<\!\!\frac{2-(1-p)W}{\sqrt{(2-(1-p)W)^2+8\beta(1-p)W^2}}\!< \!1$, we have $\frac{\partial \mathcal{G}(W,p)}{\partial W}\!\!>\!0$. 

\section{Proof of Lemma \ref{property_00}}
\label{proof_prop_dis_subprob}
By Proposition \ref{existence_discount_00}, we will use induction to show the results. Obviously, $V_0^{\alpha}(\mathbf{s})$ has properties (i) and (ii) since $V_0^{\alpha}(\mathbf{s})=0$. Then, it remains to show that given $V_n^{\alpha}(\mathbf{s})$ has the properties (i) and (ii), $V_{n+1}^{\alpha}(\mathbf{s})$ has these properties, $\forall n\geq 0$.

(i) Let $\Delta_1\geq \Delta_2$. By \eqref{iteration00}, to show that the result holds for $V_{n+1}^{\alpha}$, it suffices to show that for any $u\in A_{(\Delta_1,\Lambda)}$, there exists an action $u'\in A_{(\Delta_2,\Lambda)}$ such that $Q_{n+1}^{\alpha}(\Delta_1,\Lambda;u)\geq Q_{n+1}^{\alpha}(\Delta_2,\Lambda;u')$.

For any state, $0\in A_{(\Delta_1,\Lambda)}$. We have
\begin{align}
   &Q_{n+1}^{\alpha}(\Delta_1,\Lambda;0)\nonumber\\
   =&\Delta_1+\alpha\mathbb{E}[ V_n^{\alpha}(\Delta_1+1,\Lambda')]\\
  =&\Delta_1+\alpha\sum_{k=0}^{M} P_\Lambda(k)V_n^{\alpha}(\Delta_1+1,k)\\
  \geq&\Delta_2+\alpha\sum_{k=0}^{M} P_\Lambda(k)V_n^{\alpha}(\Delta_2+1,k)\label{appD_1}\\
  =& \Delta_2+\alpha\mathbb{E}[ V_n^{\alpha}(\Delta_2+1,\Lambda')]\\
  =&Q_{n+1}^{\alpha}(\Delta_2,\Lambda;0)
\end{align}

The inequality \eqref{appD_1} holds by our assumption that $V_n^{\alpha}$ has non-decreasing property. 

Since $D(\Delta_1)\geq D(\Delta_2)$, if $1\in A_{(\Delta_1,\Lambda)}$, then $1\in A_{(\Delta_2,\Lambda)}$. In this case, we have
\begin{align}
   &Q_{n+1}^{\alpha}(\Delta_1,\Lambda;1)\nonumber\\
   =&\Delta_1+\beta+\alpha p \mathbb{E}[ V_n^{\alpha}(\Delta_1+1,\Lambda')]\nonumber\\
   &+\alpha (1-p) \mathbb{E}[ V_n^{\alpha}(1,\Lambda')]\\
   \geq & \Delta_2+\beta+\alpha p \mathbb{E}[ V_n^{\alpha}(\Delta_2+1,\Lambda')]\nonumber\\
   &+\alpha (1-p) \mathbb{E}[ V_n^{\alpha}(1,\Lambda')]\\
   =&Q_{n+1}^{\alpha}(\Delta_2,\Lambda;1)
\end{align}

(ii) Let $\Lambda_1\leq \Lambda_2$.
\begin{align}
   Q_{n+1}^\alpha(\Delta,\Lambda_1;0)&=\Delta+\alpha\mathbb{E}[ V_n^\alpha(\Delta+1,\Lambda')]\nonumber\\
  &= Q_{n+1}^\alpha(\Delta,\Lambda_2;0)
\end{align}
If $\Lambda_1\geq D(\Delta)$, then $\Lambda_2\geq D(\Delta)$.
\begin{align}
   Q_{n+1}^\alpha(\Delta,\Lambda_1;1)=&\Delta+\beta+\alpha p \mathbb{E}[ V_n^\alpha(\Delta+1,\Lambda')]\nonumber\\
   &+\alpha (1-p) \mathbb{E}[ V_n^\alpha(1,\Lambda')]\nonumber\\
   =&Q_{n+1}^\alpha(\Delta,\Lambda_2;1)
\end{align}
By \eqref{iteration00}, $V_{n+1}^\alpha$ has property (ii).

\section{Proof of Lemma \ref{disc_stru0}}
\label{proof_disc_stru}
To show the result, we first show that given $\Lambda$, the optimal action is increasing function of the age when the age satisfies $\Lambda\geq D(\Delta)$, and then show that the optimal action at certain age is the same for any $\Lambda$ as long as distortion requirement is satisfied.

(i) We show that given $\Lambda$, if $u=1$ is optimal for the state $(\Delta,\Lambda)$, then $u=1$ is also optimal for state $(\Delta+1,\Lambda)$, where $\Delta+1\leq D^{-1}(\Lambda)$.

Since $u=1$ is optimal for $(\Delta,\Lambda)$, we have 
\begin{align}
    &Q^{\alpha}(\Delta,\Lambda;1)-Q^{\alpha}(\Delta,\Lambda;0)\nonumber\\
    =&\beta\!+\!\alpha (1\!-\!p)(\mathbb{E}V^{\alpha}(1,\Lambda')\!-\!\mathbb{E}V^{\alpha}(\Delta+1,\Lambda'))\label{thre_01}\leq 0
\end{align}
By Lemma \ref{property_00}, we have 
\begin{align}
    &Q^{\alpha}(\Delta+1,\Lambda;1)-Q^{\alpha}(\Delta+1,\Lambda;0)\nonumber\\
    =&\beta\!+\!\alpha (1\!-\!p)(\mathbb{E}V^{\alpha}(1,\Lambda')\!-\!\mathbb{E}V^{\alpha}(\Delta+2,\Lambda'))\\
    \leq & \beta\!+\!\alpha (1\!-\!p)(\mathbb{E}V^{\alpha}(1,\Lambda')\!-\!\mathbb{E}V^{\alpha}(\Delta+1,\Lambda'))\leq 0
\end{align}

(ii) Next, we show that for $\Lambda<M$ and $\Delta\leq D^{-1}(\Lambda)$, $u^\star_{\alpha}(\Delta,\Lambda;\beta)=u^\star_{\alpha}(\Delta,M;\beta)$, where $u^\star_{\alpha}(\cdot;\beta)$ is optimal decision rule.

Since for any $\Lambda\in \{0,1,\cdots,M\}$, we have
\begin{align}
    &Q^{\alpha}(\Delta,\Lambda;1)-Q^{\alpha}(\Delta,\Lambda;0)\nonumber\\
    =&\beta\!+\!\alpha (1\!-\!p)(\mathbb{E}V^{\alpha}(1,\Lambda')\!-\!\mathbb{E}V^{\alpha}(\Delta+1,\Lambda')),
\end{align}
which does not depend on $\Lambda$. Thus,
$Q^{\alpha}(\Delta,\Lambda;1)-Q^{\alpha}(\Delta,\Lambda;0)=Q^{\alpha}(\Delta,M;1)-Q^{\alpha}(\Delta,M;0)$. Hence, $u^\star_{\alpha}(\Delta,\Lambda;\beta)=u^\star_{\alpha}(\Delta,M;\beta)$.

\section{Proof for verification of conditions in \cite{sennott1989average}}
\label{proof_cond}
We need to verify the conditions listed below:
	\begin{itemize}
		\item A1: $V^{\alpha}(\mathbf{s})$ defined in \eqref{disc_cost_opt_00} is finite $\forall \mathbf{s},\alpha$.
		\item A2: $\exists I\geq 0$ s.t. $-I\leq h^{\alpha} (\mathbf{s})\triangleq V^{\alpha}(\mathbf{s})-V^{\alpha}(\mathbf{0})$, $\forall \mathbf{s}, \alpha$.
		\item A3: $\exists F(\mathbf{s})\geq 0$ s.t. $h^{\alpha} (\mathbf{s})\leq F(\mathbf{s})$, $\forall \mathbf{s},\alpha$. Moreover, for each $\mathbf{s}$, $\exists \, u(\mathbf{s})$ s.t. $\sum_{\mathbf{s'}\in \mathcal{S}}\mathbb{P}(\mathbf{s'}|\mathbf{s},u(\mathbf{s}))F(\mathbf{s'})<\infty$.
		\item A4: $\sum_{\mathbf{s'}\in \mathcal{S}}\mathbb{P}(\mathbf{s'}|\mathbf{s},u)F(\mathbf{s'})<\infty$ $\forall \mathbf{s}, u$.
	\end{itemize}
	
In Proposition \ref{existence_discount_00}, we showed that a policy $f$ that chooses $u=0$ at every time slot satisfies $L_{\mathbf{s}}^{\alpha}(f;\beta)\!<\!\!\infty$.
By \eqref{disc_cost_opt_00}, we have $L_{\mathbf{s}}^{\alpha}(f;\beta)\geq V^{\alpha}(\mathbf{s})$, which implies A1.

By Lemma \ref{property_00}, we have $V^{\alpha}(\Delta',\Lambda)\geq V^{\alpha}(\Delta,\Lambda)$ for all $\Delta'\geq \Delta$ and $V^{\alpha}(\Delta,\Lambda)\geq V^{\alpha}(\Delta,\Lambda')$ for all $\Lambda'\geq \Lambda$. Hence, by setting $I=V^{\alpha}(\mathbf{0})-V^{\alpha}(1,M)$, where $\mathbf{0}=(1,M)$ is the reference state, we prove A2.

Let $g$ be the policy that transmits whenever the number of collected samples equals $M$. This ensures that any transmission satisfies distortion requirement. Under policy $g$, states that occur after successful delivery are recurrent. Actually, the probability that no transmission succeeds after $l$ slots is $(1-P_\Lambda(M)(1-p))^{l}$. State $\mathbf{0}$ follows a successful delivery and is recurrent.
 Hence, under policy $g$ the expected cost of the first passage from state $\mathbf{s}$ to $\mathbf{0}$, denoted by $c_{\mathbf{s},\mathbf{0}}(g)$, is finite.
 Let $g'$ be a mix policy where $g$ is used until entering state $\mathbf{0}$ and the optimal policy for the $\alpha$-discounted cost problem, denoted by $g_{\alpha}$, is used afterwards. Suppose $T$ is the first time slot when system enters $\mathbf{0}$. Then, we have,
   \begin{align}
   V^{\alpha}(\mathbf{s})
   	\leq &\mathbb{E}_{g'}[\sum_{t=1}^{T-1}\!\!\alpha^{t-1}C(\mathbf{s}_t,\!u_t;\beta)|\mathbf{s}]\nonumber\\
   	&+\!\mathbb{E}_{g'}[\sum_{t=T}^{\infty}\!\!\alpha^{t-1}C(\mathbf{s}_t,\!u_t;\beta)|\mathbf{0}]\\
   	\leq & c_{\mathbf{s},\mathbf{0}}(g)+ \mathbb{E}_{g_{\alpha}}(\alpha^{(T-1)})V^{\alpha}(\mathbf{0})\\
   	\leq & c_{\mathbf{s},\mathbf{0}}(g)+ V^{\alpha}(\mathbf{0}).
   \end{align}  
   Hence, by setting $F(\mathbf{0})=0$ and $F(\mathbf{s})=c_{\mathbf{s},\mathbf{0}}(g)$ for $\mathbf{s}\!\neq \!\mathbf{0}$, we prove A3. After transition from $\mathbf{s}$ under any action, there will be at most two possible states. Since for all $\mathbf{s}$, $F(\mathbf{s})<\infty$, the sum of at most two $F(\cdot)$ is also finite. Hence, A4 holds.
\end{document}
\endinput